\newtheorem{theorem}{Theorem}
\newtheorem{lemma}{Lemma}
\theoremstyle{definition}
\newtheorem{remark}{Remark}
\newtheorem{assumption}{Assumption}
\newcommand{\bX}{{\bf X}}
\newcommand{\bep}{\mbox{\boldmath $\varepsilon$}}
\newcommand{\bbeta}{\boldsymbol\beta}
\newcommand{\be}{\begin{equation}}
\newcommand{\ee}{\end{equation}}
\newcommand{\bee}{\begin{equation*}}
\newcommand{\eee}{\end{equation*}}
\definecolor{red}{rgb}{0,0,0}
\DeclareMathOperator{\sign}{sign}
\begin{document}

\begin{frontmatter}

\title{Selection by Partitioning the Solution Paths}
\runtitle{Selection by Partitioning the Solution Paths}


\begin{aug}
	
\author{\fnms{Yang} \snm{Liu}\ead[label=e1]{yliu23@fhcrc.org}}
\address{Public Health Science Division, Fred Hutchinson Cancer Research Center,\\
	Seattle, WA, 98109\\
	\printead{e1}}

\

\and 

\author{\fnms{Peng} \snm{Wang}
	\ead[label=e2]{wangp9@ucmail.uc.edu}}

\address{Department of Operations, Business Analytics and Information Systems, University of Cincinnati, Cincinnati, OH, 45220  \\
	\printead{e2}}

\runauthor{Y. Liu and P. Wang}

\affiliation{University of Cincinnati}

\end{aug}

\begin{abstract}
The performance of penalized likelihood approaches depends profoundly on the selection of the tuning parameter; however, there is no commonly agreed-upon criterion for choosing the tuning parameter. Moreover, penalized likelihood estimation based on a single value of the tuning parameter suffers from several drawbacks. This article introduces a novel approach for feature selection based on the entire solution paths rather than the choice of a single tuning parameter, which significantly improves the accuracy of the selection. Moreover, the approach allows for feature selection using ridge or other strictly convex penalties. The key idea is to classify variables as relevant or irrelevant at each tuning parameter and then to select all of the variables which have been classified as relevant at least once. We establish the theoretical properties of the method, which requires significantly weaker conditions than existing methods in the literature. We also illustrate the advantages of the proposed approach with simulation studies and a data example.
\end{abstract}

\begin{keyword}[class=MSC]
	\kwd[Primary ]{62F07}
	\kwd[; secondary ]{62J07}
	\kwd{62J86}
\end{keyword}

\begin{keyword}
	\kwd{Penalized likelihood}
	\kwd{Lasso}
	\kwd{Variable/feature selection}
	\kwd{Solution paths}
	\kwd{AIC/BIC}
	\kwd{Cross-validation}
	\kwd{Tuning}
\end{keyword}
\tableofcontents
\end{frontmatter}

\section{Introduction}
\label{sec:intro}

The penalized likelihood approach has been very popular for feature selection problems. Under the currently used framework, one first needs to compute the solution paths and then choose a tuning parameter based on a certain criterion. The solution yielded with the chosen tuning parameter is considered to provide the final estimates of the parameters. The problem of choosing a proper tuning parameter is notoriously difficult. In this paper, we propose to select features by utilizing the entire solution paths rather than searching for a single value of the tuning parameter.    

Consider the following linear regression model:
\be
\label{lm}
\mathbf{y}=\mathbf{X}\boldsymbol{\beta^\ast}+\bep;\  \bep\sim\mathcal{N}(\mathbf{0},\sigma^2\mathbf{I}),
\ee
where $\mathbf{y}=(y_1,\dots,y_n)^T$ is an $n$-dimensional response vector, $\boldsymbol{\beta^\ast}=(\beta_1^\ast,\dots,\beta_p^\ast)^T$ is a $p$-dimensional vector of regression coefficients, $\mathbf{X}=(\mathbf{x_1},\dots,\mathbf{x_p})$ is an $n\times p$ design matrix, and $\bep=(\epsilon_1,\dots,\epsilon_n)^T$ is an $n$-dimensional vector of independent and identically distributed random errors. We assume that all of the variables in $\mathbf{X}$ are standardized, so that the coefficients in $\boldsymbol{\beta^\ast}$ are on the same scale.  For the linear regression problems described in \eqref{lm}, the penalized likelihood approach is equivalent to the penalized least squares regression, and the regression coefficients are estimated by minimizing the following objective function:
\be
\label{penalty}
\frac{1}{n}\|\mathbf{y}-\boldsymbol{X\beta}\|_2^2+\lambda\sum_{j=1}^p J(|\beta_j|),
\ee
where $J(\cdot)$ is a penalty function which controls the number of non-zero coefficients and $\lambda>0$ is a tuning parameter. For the penalty function $J(\cdot)$,  one could use a convex penalty like the lasso \citep{tibshirani1996regression} or the adaptive lasso \citep{zou2006adaptive} or one of the non-convex penalties such as the smoothly clipped absolute deviation (SCAD) penalty \citep{fan2001variable}, the minimax concave penalty (MCP) \citep{zhang2010regularization}, or the truncated $l_1$ penalty   \citep{shen2012likelihood}, among others.

Some general selection criteria for the tuning parameter $\lambda$ include cross-validation (CV), generalized cross-validation (GCV), the Akaike information criterion (AIC) \citep{akaike_information_1973}, the Bayesian information criterion (BIC) \citep{schwarz_estimating_1978}, and the generalized information criterion \citep{atkinson_note_1980}. \citet{chen2008extended} pointed out that these criteria usually identify too many irrelevant features when the number of variables is large. This phenomenon has also been described by \citet{broman2002model}, \citet{siegmund2004model}, and \citet{bogdan2004modifying} in their studies of quantitative loci mapping. \citet{chen2008extended} proposed the extended BIC (EBIC), which promotes model sparsity by adjusting BIC with an additional penalty term for the growing number of parameters in the model. Recently, \citet{sun2013consistent}  developed a new technique via variable selection stability, which directly focuses on the selection of informative variables. \citet{fan2013tuning} also proposed to select the tuning parameter by optimizing the generalized information criterion with an appropriate penalty depending on the dimensions in the model. 

Although the above criteria have been well studied for over a decade, there is no concurrence of opinion regarding which criterion to employ for choosing the tuning parameter. As examples,  see Table \ref{criteria} for a short list of publications in major statistics and machine learning journals and the different criteria they use. In fact, the feature selection procedure currently in use, which utilizes only one chosen value for the tuning parameter, suffers from the unavoidable drawback that it is often impossible to correctly identify all the features, no matter which criterion is used.


\begin{table}
	\caption{The general selection criteria in the statistics and machine learning literature in major journals.}
	\label{criteria}
	\centerline{
			\begin{tabular}{lll}
				\hline
				Criterion & Method &References \\
				\hline
				CV        & Adaptive lasso &\citet{zou2006adaptive}, JASA \\
				& Fused lasso &\citet{tibshirani2005sparsity}, JRSSB\\
				& Truncated $l_1$ penalty &\citet{shen2012likelihood}, JASA \\
				\hline
				GCV       & Lasso &\citet{tibshirani1996regression}, JRSSB \\
				& SCAD &\citet{fan2001variable}, JASA \\
				\hline
				BIC       & Lasso &\citet{wang2007regression}, JRSSB         \\
				& Lasso&\citet{yuan2007model}, Biometrika       \\
				\hline
				EBIC      &  Tilting &\citet{cho2012high}, JRSSB \\
				&  Group lasso &\citet{huang2010variable}, AOS \\
				\hline
	\end{tabular}}
\end{table}
{
 \color{red}
 To overcome this drawback, we develop an innovative, intuitive approach which avoids choosing a tuning parameter. Instead, our proposed approach utilizes information from the entire solution paths to improve the selection accuracy. Thus we named the proposed procedure  {\it selection by partitioning the solution paths} (SPSP). Besides improvement in selection accuracy, the SPSP approach can also achieve selection consistency under a weaker condition compared to the currently used framework. This is because SPSP does not require selection consistency at any specific values of the tuning parameter.  As a matter of fact, we do not even require the estimation of coefficients shrunk to zeros. This brings another major advantage of the SPSP approach:  we can achieve feature selection with an $l_2$ penalty or ridge regression. It is well known that ridge regression is more stable and handles collinearity better compared to the lasso. Now with the SPSP approach, we can enjoy these nice properties of the ridge regression without sacrificing capabilities of feature selection.  Moreover, the minimizer of $l_2$ penalized likelihood function often has an explicit solution. Therefore SPSP also greatly reduces the efforts in algorithm development in presence of complicated likelihood functions. 
}   
%
%
%

{\color{red}
%
	
The SPSP procedure is related to the stability selection approach proposed in the seminal discussion paper by \citet{meinshausen2010stability}. Their approach is based on the probabilities that variables will be selected, and the probabilities are obtained from a generic sub-sampling approach. Therefore, the stability selection approach does not require the selection of a tuning parameter either. However, unlike the proposed SPSP procedure, stability selection does not work with ridge regression. Moreover, the computational cost of the SPSP procedure is only a tiny fraction of that for stability selection, since no sub-sampling is involved. Finally, we find from simulation studies that stability selection tends to select too few variables and therefore produces a significantly higher false negative rate than SPSP. This work is also remotely related to Bayesian variable selection approaches, where the tuning parameters or candidate models are assigned a prior distributions and the posterior distributions of the models are evaluated. See, for example, \citet{hoeting1999bayesian, raftery1997bayesian, posada2004model, barbieri2004optimal}.  }

The rest of this article is organized as follows. {\color{red} Section 2 provides a simulated example to motivate the problem. }
Section 3 introduces the SPSP approach. Section 4 discusses the selection consistency  of the SPSP procedure. Section 5 presents the results from various simulation examples. Section 6 provides an application of SPSP in a cancer study to detect the significant genes for glioblastomas, and Section 7 discusses the advantages and future potential of this work. Appendix provides technique proofs and additional simulation studies.

\section{Selection by Partitioning the Solution Paths}

In this section, we will illustrate the advantage and the idea of the proposed SPSP procedure with an example from model \eqref{lm} in \citet{wang2011random}, where 
$p=40$, 
$\beta_1^\ast=\cdots=\beta_5^\ast=3,$ and $\beta_6^\ast=\cdots=\beta_{10}^\ast=-2$. The rest of the coefficients $\beta_j^\ast, j=11, \dots, 40$, are all zeros. 
The entries for the variables $\mathbf{x}_{j},j=1,\dots,p$, are generated from a multivariate Gaussian distribution whose marginals are the standard normal distribution. The pairwise correlation between the first $10$ variables $\mathbf{x}_{j},j=1,\dots,10$, is $0.9$. The remaining $30$ variables  $\mathbf{x}_{j},j=11,\dots,40$, are mutually independent and they are also independent of the first $10$ variables. Furthermore, we generate errors from the normal distribution with mean 0 and standard deviation 3.  The sample size is set to be $n=50$.

We compute the solution paths and plot them in Figure~\ref{solution-path}.
The dashed lines represent the solution paths for the non-zero coefficients and the solid lines represent those for the zero ones. The tuning parameters chosen by the twofold cross-validation,  generalized cross-validation , AIC, BIC, and the extended BIC are shown by the vertical lines. 
Here, ``BEST" is the tuning parameter  which minimizes the number of incorrect selections (false positives + false negatives). 
We observe that cross-validation, generalized cross-validation, AIC, and BIC tend to select too many spurious variables and that the extended BIC tends to drop most of the important variables. Even for the ``BEST'' selection, the model excludes many non-zero coefficients.
The problem is more evident when we focus on the three lines in the right panel of Figure \ref{solution-path}. Here the two dotted lines (1 and 3) are the solution paths for two non-zero coefficients,  while the solid line (2) is the solution path for a zero coefficient.
Apparently, selecting a small $\lambda$, as AIC, BIC, and generalized cross-validation do, misleads us into identifying all three coefficients as non-zero. On the other hand, a large $\lambda$, as selected in cross-validation, the extended BIC, and ``BEST", incorrectly shrinks both coefficients of the important variables to zero. In fact, it is impossible to correctly identify all three features no matter which value of the tuning parameter we choose, although one can likely see the differences between the three features by visual inspection. 
{
	\color{red}
	For this example, 
	we also provide the solution paths of elastic-net in the left panel of Figure~\ref{af5}, which shows that these traditional tuning parameter criteria also suffer from the same problem for elastic-net.
}

\begin{figure}[htbp]
	\vspace{6pc}
	\begin{center}
		\includegraphics[width=0.49\columnwidth]{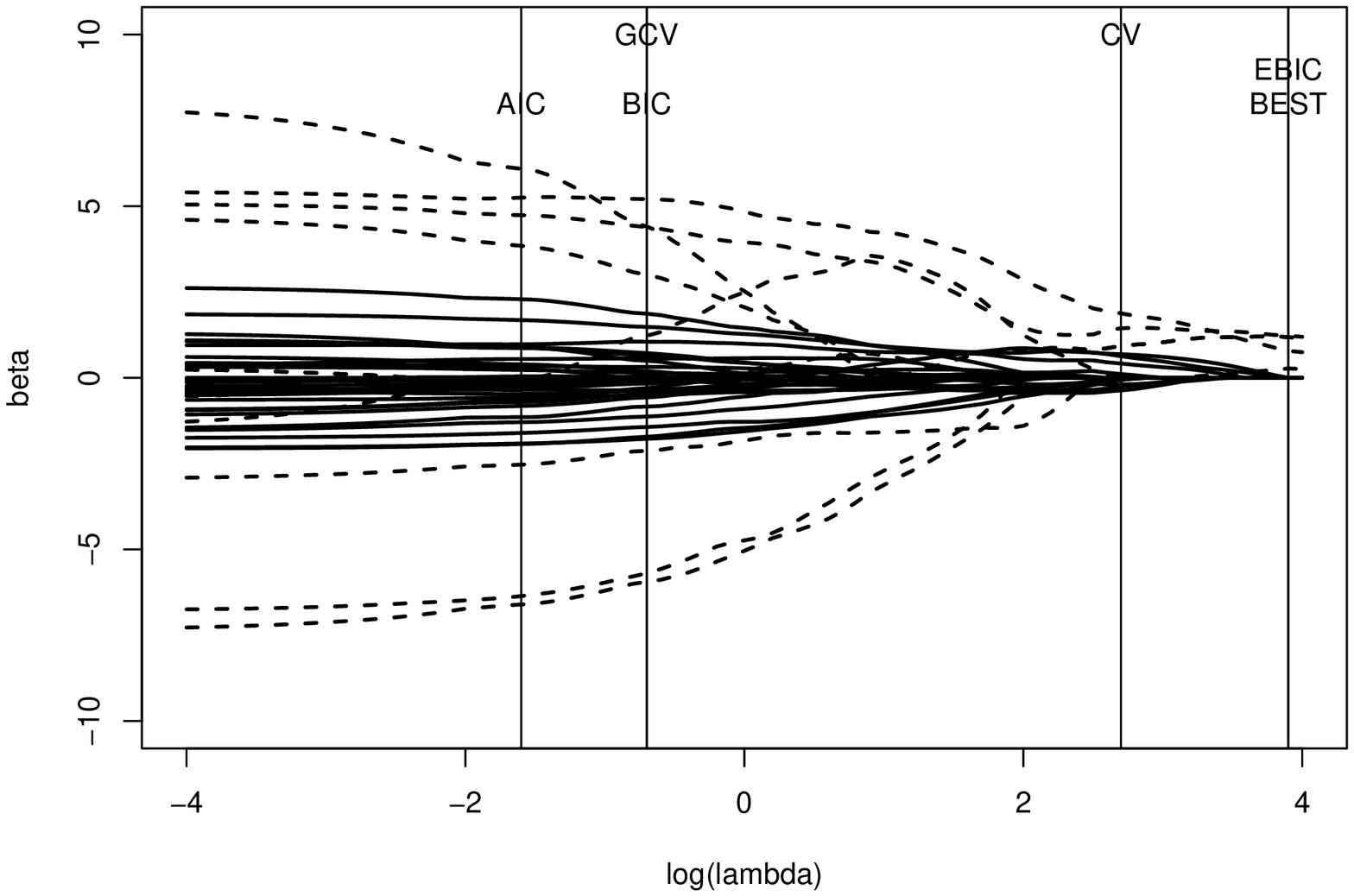}
		\includegraphics[width=0.49\columnwidth]{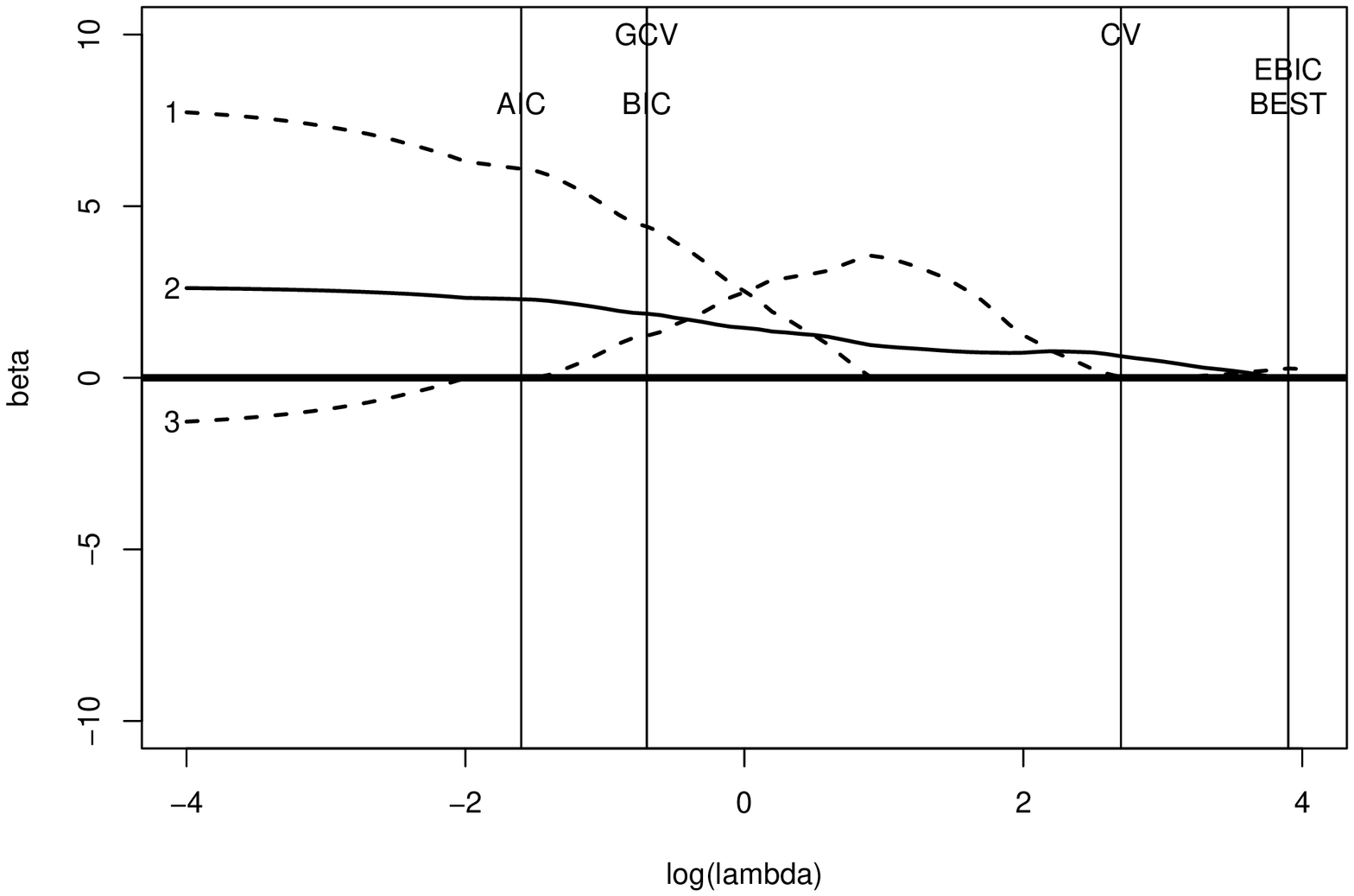}
		\caption[Left: The lasso solution paths for the simulated example. Right: The lasso solution paths of the non-zero variables ``1'', ``2'', and the zero variable ``3''.]{Left: The lasso solution paths for the simulated example. The dashed lines are the paths of the $10$ non-zero coefficients, while the black lines are the paths of the $30$ zero coefficients 
			The vertical lines represent  the tuning parameters selected by different criteria.
			Right: The lasso solution paths for the non-zero coefficients, 1 and {\color{red}3}, and the zero coefficient, {\color{red}2}. Here CV is cross-validation, GCV is generalized cross-validation and EBIC is extended BIC.
		}
		\label{solution-path}
		
		\includegraphics[width=0.49\columnwidth]{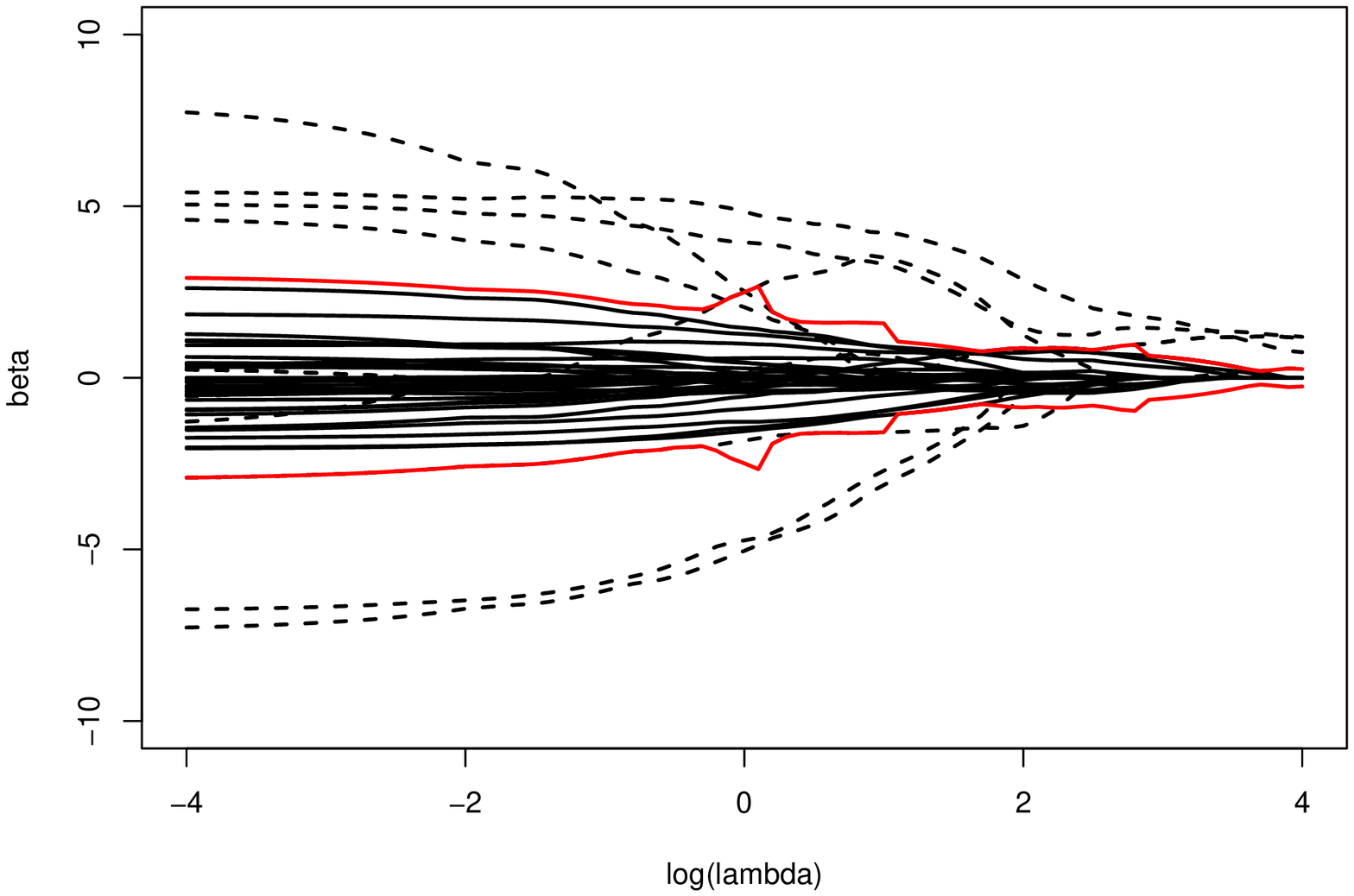}
		\includegraphics[width=0.49\columnwidth]{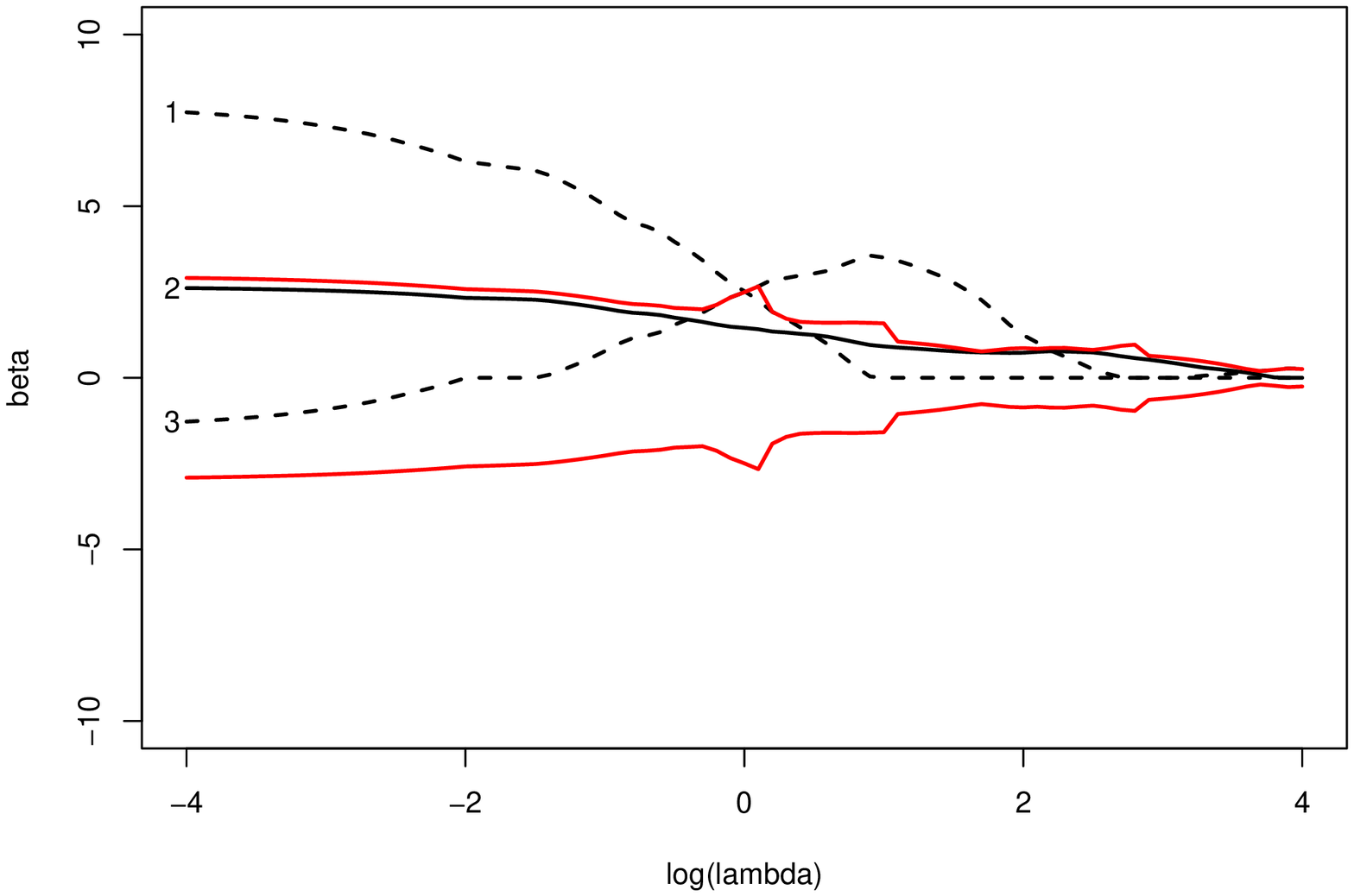}
		\caption{Left: Partitions of the lasso solution paths of the same simulated example. Right: Partitions of the lasso solution paths for the non-zero coefficients, 1 and 3, and the zero coefficient, 2.}
		\label{cluster}
	\end{center}
\end{figure}

The current restriction of utilizing just one tuning parameter can seriously reduce the accuracy of the feature selection in general, since solution paths like those in Figure~\ref{solution-path} happen quite frequently, and for any penalty we may employ. This is especially true when there are large correlations among the variables or the dimensions of the features are extremely high.  

Such deficiencies as described above motivate us to develop an approach that allows for collectively using evidence from the entire solution paths, as opposed to letting results from a single value of $\lambda$ dictate everything.  We achieve this by first dividing estimates from each vale of $\lambda$ into two clusters. Then we combine the cutoff points of all such clusters to form a boundary curve to partition the solution paths into  two regions, as shown by the red curves in Figure \ref{cluster}. We call the region inside the two red curves the zero region and the region outside the two curves the non-zero region. Finally, we choose all variables which have been identified as relevant variables for at least one value of $\lambda$ as the important features. We consider a feature to be unimportant if its solution path never goes out of the zero region.

 It can readily be observed in Figure \ref{cluster} that the SPSP procedure correctly selects $9$ of the $10$ relevant variables and drops all irrelevant ones, outperforming the results for any single value of $\lambda$ in terms of selection accuracy. 
 As a result, our approach can correctly identify the relevance of features like the three in Figure \ref{solution-path}, labeled 1, 2, and 3. {\color{red} We also applied the SPSP procedure on the solution paths of elastic-net for the same example, as seen in the right panel of Figure~\ref{af5}.}
 Another advantage of SPSP is that it does not require the coefficients of the unimportant variables  to shrink to zero;  therefore, it allows us to carry out feature selection with only a ridge regression.

The proposed SPSP procedure considers a feature important even if its solution path enters the non-zero region only once. The strategy may seem aggressive in identifying relevant variables. This is because we start the SPSP process rather conservatively, in the sense that we consider every variable ``unimportant" for the smallest value of $\lambda$, when we initiate the partitioning process.
The clustering at larger values of $\lambda$ depends on the results for the previous $\lambda$. Thus, the SPSP procedure combines a conservative starting point with an aggressive selection strategy to optimize the selection accuracy.

\begin{figure}
	\begin{center}
		\includegraphics[width=0.49\columnwidth]{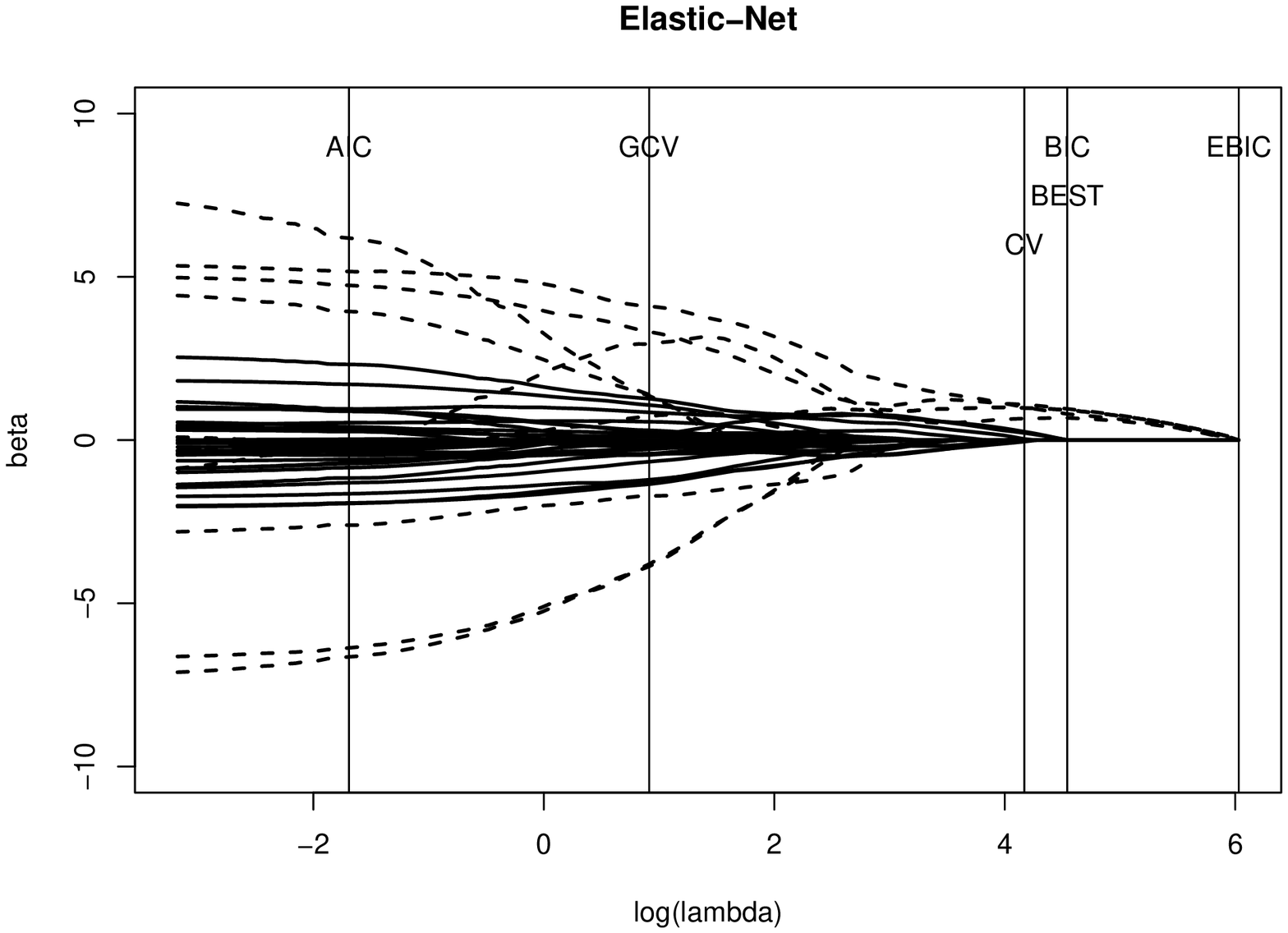}
		\includegraphics[width=0.49\columnwidth]{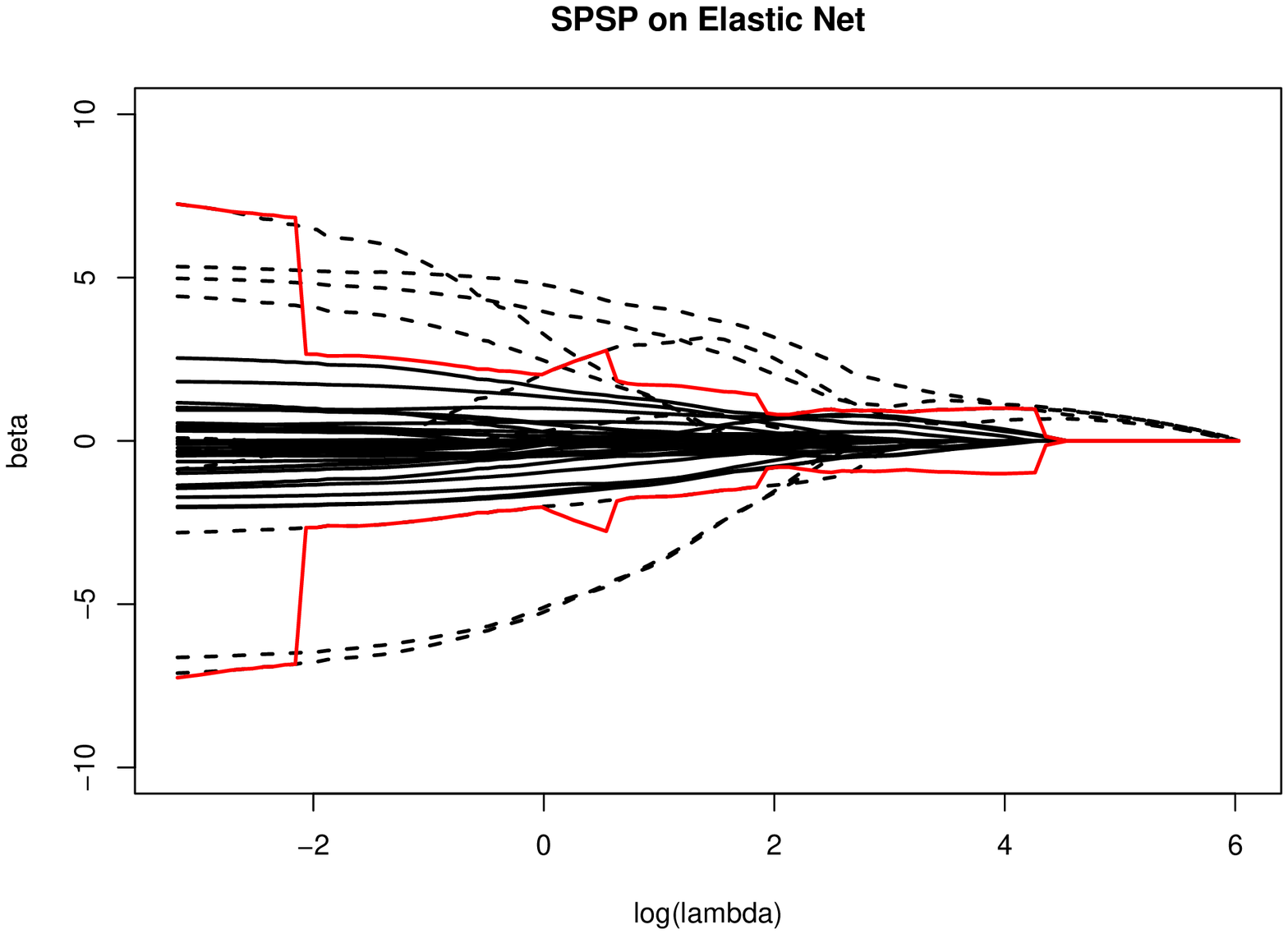}
		
		\caption{Left: The elastic net solution paths for the simulated example. The dashed lines are the paths of the non-zero coefficients, while the black lines are the paths of the zero coefficients 
			The vertical lines represent the tuning parameters selected by different criteria.
			Right: Partitions of the elastic net solution paths of the example using SPSP.
		}
		\label{af5}		
		
	\end{center}
\end{figure}

\section{Notations and Algorithm}
\subsection{Notations}
Consider the  penalized least squares problem in \eqref{penalty}.
We denote the index set for the non-zero coefficients as $S=\{j:\beta_j^\ast\neq 0\}$, with $s=|S|$, and the index set for the zero coefficients  as $S^c=\{j:\beta_j^\ast=0\}$. The goal of feature  selection is to correctly recover this sparsity pattern from the noisy observations in the model and estimate $S$. Once we specify the penalty function $J(\cdot)$ in \eqref{penalty}, we can compute the solution path on a grid for the tuning parameters $\lambda_1<\lambda_2<\cdots<\lambda_k$. In practice, the grid is usually 
equi-distant on the log scale \citep{buhlmann2011statistics,shen2012likelihood}. 
For each $\lambda_k$, we obtain a vector {\color{red}$\boldsymbol{\hat{\beta}^{(k)}}=(\hat \beta_{1}^{(k)}, \dots, \hat \beta_{p}^{(k)})^T$} of the penalized least squares estimators.
A variable is more likely to be identified as relevant if its estimator is farther away from $0$, regardless of the sign of the estimator. Therefore, we use the absolute values  {\color{red}$\boldsymbol{\hat {b}^{(k)}}=(\hat b_{1}^{(k)}, \dots, \hat b_{p}^{(k)})^T = (|\hat \beta_{1}^{(k)}|, \dots, |\hat \beta_{p}^{(k)}|)^T$} of the estimators.

In general, variables with a larger value of $\hat b_{j}^{(k)}$ are more likely to be important.  We are therefore interested in finding  a proper cutoff point $T_k=T(\lambda_k)$ such that the estimated relevant set $\hat S_k$ and irrelevant set $\hat S_k^c$ at $\lambda=\lambda_k$ are derived as $
\hat S_k =\{j : \hat b_{j}^{(k)} > T_k \}
$ and $
\hat S_k^c = \{ j:\hat b_{j}^{(k)}\leq T_k\}.
$
To obtain $(T_k,\hat S_k, \hat S_k^c )$ for each $\lambda_k$, we sort the absolute values {\color{red}$\hat b_{1}^{(k)}, \dots, \hat b_{p}^{(k)}$ in ascending order:
	$\hat b_{(1)}^{(k)}\leq \cdots \leq\hat b_{(p)}^{(k)},$ where $\hat b_{(j)}^{(k)}$ is the $j$th order statistics of $\hat b_{1}^{(k)}, \dots, \hat b_{p}^{(k)}$.}
Then we define the  adjacent distances between these ordered values as {\color{red}
	\begin{align}
	\label{distance}
	D_{j}^{(k)}= \hat b_{(j)}^{(k)} -\hat b_{(j-1)}^{(k)}, j=1, \dots, p. 
	\end{align}
} Note that $D_{1}^{(k)}$ is the adjacent distance between $\hat b_{(1)}^{(k)}$ and $0$, as we define $\hat b_{(0)}^{(k)}=0$ for convenience.
Letting $\hat s_k=|\hat S_k|$ be the number of variables in the estimated relevant set $\hat S_k$, there are $p-\hat s_k$ variables in the estimated irrelevant set $\hat S_k^c$. Hereafter,
we define the gap between $\hat S_k$ and $\hat S_k^c$ as the adjacent distance between $\hat b_{(p-\hat s_k)}^{(k)}$ and $\hat b_{(p-\hat s_k+1)}^{(k)}$, i.e.
{\color{red}$D(\hat S_k, \hat S_k^c)=D_{p-\hat s_k+1}^{(k)}=\hat b_{(p-\hat s_k+1)}^{(k)}-\hat b_{(p-\hat s_k)}^{(k)}.$} 
Further, let  $D_{\max} (\hat S_k) = \max \{D_{j}^{(k)}: j >p-\hat s_k+1 \}$ be the largest adjacent distance in $\hat S_k$, 
let $D_{\max} (\hat S_k^c) = \max \{D_{j}^{(k)}: j<p-\hat s_k+1 \}$  be  the largest adjacent distance in $\hat S_k^c$, and let  $$D_{\max2} (\hat S_k^c) = \max \{D_{j}^{(k)}: j<\tilde j, D_{\tilde j}^{(k)}=D_{\max} (\hat S_k^c)\} $$ be the largest adjacent distance under $D_{\max} (\hat S_k^c)$.


\subsection{Algorithm of SPSP}
{\color{red}

{To identify an important feature $j, j \in S$ correctly without introducing any spurious variables, the estimate $\hat \beta_{j}^{(k)}$ has to be larger than the estimates of zero coefficients at some $\lambda_k$. But unlike the currently used methods of using just one tuning parameter, we do not require all of estimate $\hat \beta_{j}^{(k)}$ ``concurrently" larger than all the estimates of zero coefficients at certain $\lambda_k$. Rather we just need $\hat \beta_{j}^{(k)}$ to be relatively larger at some $\lambda_k$ for variable $j$ to be selected. That is, even if  $\hat \beta_{j}^{(k)}$ and $\hat \beta_{j'}^{(k)}$, $j, j' \in S$ take larger values (compared to the estimates of zero coefficients) at different $\lambda_k$, we can still identify both $j$ and $j'$ without the cost of adding any false positive signals. The idea is to distinguish the larger estimates from the smaller ones at each $\lambda_k$ and consider those variables with the larger estimates as $\hat{S}_k.$}

{The major issue therefore is to find a proper way to separate the larger and smaller estimates at each $\lambda_k.$ It is inappropriate to use a constant threshold because for larger $\lambda_k$ all the estimates might be small. However, the order of  $\hat \beta_{j}^{(k)}, j \in S$ are still likely larger than that of estimates of zero coefficients. Following this intuition, we can set the threshold $T_k$ at $\lambda_k$ as $\{T_k: \hat \beta_{j}^{(k)} < C\hat \beta_{l}^{(k)},  \hat \beta_{j}^{(k)} > C\hat \beta_{s}^{(k)} \mbox{ for any } \hat \beta_{j}^{(k)}, \hat \beta_{l}^{(k)} >T_k, \hat \beta_{s}^{(k)} <T_k  \}$,  where $C$ is certain constant. This way,  we consider an estimate  $\hat \beta_{j}^{(k)}$ relatively larger if  $\hat \beta_{j}^{(k)} > C\hat \beta_{s}^{(k)} $ for any smaller estimates $\hat \beta_{s}^{(k)}$ and guarantee that all the larger estimates are of the same order. As we can imagine, such a threshold happens where the adjacent distance (3) is relatively large. Therefore, instead of comparing all the $\hat\beta_{j}^{(k)}$ pairwise to find $T_k$, we can equivalently try to find an adjacent distance that is large enough to separate $\hat S_k$ and $\hat S_k^c$.}

In principle,  $D(\hat S_k, \hat S_k^c)$, the gap between $\hat S_k$ and $\hat S_k^c$, should be sufficiently large to separate the irrelevant features from the important ones if 1. all $\hat \beta_{j}^{(k)}$ above the gap  $D(\hat S_k, \hat S_k^c)$ are of the same order, 2. all $\hat \beta_{j}^{(k)}$ above the gap $D(\hat S_k, \hat S_k^c)$ are of higher order than those below the gap.  It is not hard to verify that these are equivalent to the following statements in terms of adjacent distances: 1. the adjacent distances above the gap $D(\hat S_k, \hat S_k^c)$ are not of higher order than the gap; 2. the adjacent distances below the gap $D(\hat S_k, \hat S_k^c)$ are of lower order than the gap. 
For that reason, we consider $D(\hat S_k, \hat S_k^c)$ to be large enough if it meets the following two criteria:
\begin{eqnarray}
\label{rule1}
\frac{D_{\max} (\hat S_k)}{D(\hat S_k, \hat S_k^c)} \leq R,\\
\label{rule2}
\frac{D(\hat S_k, \hat S_k^c)}{D_{\max} (\hat S_k^c)} >R,
\end{eqnarray}
where $R$ is a  constant which we can estimate from the data. }

Based on this principle, we develop the SPSP algorithm as follows: \\
\noindent 
\begin{minipage}{\textwidth}
\noindent{\rule{\linewidth}{0.4pt}}

\noindent{Algorithm 1: Selection by Partitioning the Solution Paths (SPSP)}\label{SPSPA}
\begin{itemize}
	\item[1]  At $\lambda_1$, set the initial values to $T_1=\infty, \hat S_1=\emptyset,$ and $\hat S_1^c=\{1, \dots, p\}$, and estimate the constant $R$ as $R=D_{\max} (\hat S_1^c)/D_{\max2} (\hat S_1^c).$
	\item[2] At each $\lambda_k$, estimate $T_k, \hat S_k, \mbox{and } \hat S_k^c$ from  $T_{k-1}, \hat S_{k-1}, \hat S_{k-1}^c$, and {\color{red}$\boldsymbol{\hat {b}}^{(k)}$}.
	\begin{itemize}
		\item[2.1] Update $T_k=\max_{j \in \hat S_{k-1}^c} \hat {b}_{j}^{(k)}, \hat S_{k}=\{j:  \hat {b}_{j}^{(k)} > T_k\}, \hat S_{k}^c=\{j:  \hat {b}_{j}^{(k)} \leq T_k\}$;
		\item[2.2] Calculate $D_{1}^{(k)}, \dots, D_{p}^{(k)}$. Further, obtain $D_{\max} (\hat S_k^c), D_{\max 2} (\hat S_k^c)$,  and $D(\hat S_k, \hat S_k^c)$.
		\item [2.3] If $D(\hat S_k, \hat S_k^c) \leq R\times D_{\max} (\hat S_k^c) $   and $D_{\max} (\hat S_k^c) > R \times  D_{\max 2} (\hat S_k^c)$, $D_{\max} (\hat S_k^c)$ is the new gap between $S_k$ and $S_k^C$, i.e. $D(\hat S_k, \hat S_k^c)= D_{\max} (\hat S_k^c).$ Therefore we also update {\color{red}$$ T_k=\hat b_{(\tilde j-1)}^{(k)}, \hat S_{k}=\{j:  \hat {b}_{j}^{(k)} > T_k\}, \hat S_{k}^c=\{j:  \hat {b}_{j}^{(k)} \leq T_k\},$$ where $\tilde j$ is the location of $D_{\max} (\hat S_k^c)$  i.e.  $D_{\tilde j}^{(k)}=D_{\max} (\hat S_k^c).$ } \\
			{\color{red}Otherwise} $T_k, \hat S_k, \hat S_k^c$ remain unchanged as in Step 2.1.
	\end{itemize}
	\item[3] Make $k=k+1$ and repeat Step 2 until $k=K$.
	\item[4] Identify the union of all $\hat S_k$ as the index set for our selected relevant variables, i.e. $\hat S= \bigcup\limits_{k=1}^K \hat S_k. $
\end{itemize}
\noindent{\rule{\linewidth}{0.4pt}}
\end{minipage}

\begin{remark}
	Principles \eqref{rule1} and  \eqref{rule2} are implemented in Step 2.3. They are equivalent to the claim that the estimators of the non-zero coefficients should be of the same order and they should have a higher order than the estimators of the zero-coefficients.
	Instead of comparing every pair of the estimators, we just use adjacent distances for simplicity of computation. Therefore, finding the proper $T_k$ now transforms to finding an adjacent distance which is large
	enough---i.e. which satisfies \eqref{rule1} and  \eqref{rule2}---to be the gap between the estimators for the zero coefficients
	and those for the non-zero ones. The constant $R$, which we obtain from the data in Step 1, is used to decide whether $D_{\max} (\hat S_k^c)$ is large enough to replace the current $D(\hat S_k, \hat S_k^c)$. As a matter of fact, our final selection results are not sensitive to the value of $R$, as evidenced by the simulation study conducted in Appendix \ref{A2}.  
\end{remark}

%


\begin{remark}
	In Step 2 we find, for each $\lambda_{k}$, the cutoff point $T_k$, the location of the gap which distinguishes the relevant and irrelevant variables, based on the results from $\lambda_{k-1}$. This not only simplifies the computation process, but also makes the boundary line $T_k=T(\lambda_k)$ smoother to avoid unstable selection results. Specifically, in Step 2.1, we first use the largest estimated coefficients among those identified as ``zero coefficients'' for $\lambda_{k-1}$ as the current boundary. This takes care of the case where some coefficients in $\hat S_{k-1}$ become small and enter into the zero region at $\lambda_k$. In Step 2.2 and Step 2.3, we decide whether any adjacent distances within $\hat S_k^c$ are large enough to be considered as the new gap between the zero and non-zero coefficients. This handles the scenario where there are too few variables in $\hat S_k$, so that a ``large'' gap still exists.
\end{remark}

\begin{remark} 
	When the algorithm starts, with $\lambda_1$, we use the initial values set in Step 1, with all variables considered ``irrelevant''. In other words, we are conservative about identifying the relevant variables at the start of the procedure. This is because we implement an aggressive selection strategy in Step 4 by using the union of all $\hat S_k$s as our estimator $\hat S$ for the set of the relevant variables. This effectively balances out the very conservative initial setup to achieve the best selection accuracy.  Another reason for the conservative initial setup is that the estimations at the small $\lambda_k$ parameters are usually unstable, as there might be too many non-zeros in $\boldsymbol{\hat{\beta}_k}$  since the shrinkage effects of the penalty are minimal. As a result, even the estimation of a zero coefficient can possibly be rather large. 
\end{remark}


Once we identify the index set of all the relevant variables  $\hat S$, we estimate the regression parameters $\hat{\beta}_{\hat{S}}$ in a model that only includes the features that have been selected,
$
\mathbf{y}=\mathbf{X}_{\hat{S}}\boldsymbol{\beta}_{\hat{S}}+\bep,
$
where $\mathbf{X}_{\hat{S}}=(\mathbf{x_j})_{j\in\hat S}$ and $\boldsymbol{\beta}_{\hat{S}}=(\beta_j)_{j\in\hat S}^T$.
In most cases, the number of features in $\hat S$ is smaller than the sample size, so we just use the least squares estimator as $\hat{\beta}_{\hat{S}}$. If the number of selected variables is larger than the sample size, we can use a ridge regression with a small shrinkage factor.

One unique advantage of this procedure is that it can be applied not only to penalties like lasso, adaptive lasso, SCAD, and MCP; it can also be  applied to penalties which do not produce the sparse solutions, such as the $l_2$ penalty. Therefore, the procedure can greatly reduce computational complexity for feature selection problems, since strictly convex penalties like a ridge are easier to solve and their estimators have an explicit form.

In addition, the SPSP algorithm can be easily extended to handle  selection problems in a wide range of models, including graphical models, generalized linear models, and Cox's proportional hazards models. The penalized likelihood approach, which obtains a sparse estimate by solving an objective function consisting of a likelihood and a penalty function, is usually applied in these models. Consequently, we can apply the SPSP algorithm to penalized likelihood estimators in a similar fashion. A simulation example using SPSP on Gaussian graphical models is provided in Appendix \ref{A3}.

\section{Consistent Feature Selection}
\label{sec:verify}

In this section, we discuss the theoretical  properties of  the SPSP procedure with lasso for feature selection in a modern high-dimensional regime where $p>>n.$ Here we limit our efforts to linear regression, although the SPSP procedure is generally applicable to other selection problems as well. The technical proofs of all the lemmas and theorems in this section are available in Appendix \ref{A1}.

``Consistent variable selection'' for a procedure refers to the following property of its estimator $\hat S$: 
$$
P(\hat S= S) \rightarrow 1 \mbox{ as } n \rightarrow \infty.
$$
In most of the existing literature, it is only possible to achieve feature selection consistency if the tuning parameter is restricted to a specific interval. Moreover, the widths of such intervals are usually so small that they converge to 0---see, for example, \cite{fan2001variable}, \cite{fan2004nonconcave}, \cite{zhao2006model}, and \cite{zou2006adaptive}. 
It is also  commonly recognized that under the high-dimensional setting, $p>>n$, the Gram matrix $\hat\Sigma=\frac{1}{n}\mathbf{X}^T\mathbf{X}$ is degenerate, which raises many difficulties in controlling the values of the lasso estimator. Therefore, some conditions on the design matrix are always required to establish the consistency of the feature selection. The most typical condition is probably the following irrepresentable condition \citep{meinshausen2006high, yuan2007model, zhao2006model, zou2006adaptive}: $$\left|\frac{1}{n}\mathbf{X}_{S^c}^T\mathbf{X}_{S}\left(\frac{1}{n}\mathbf{X}_{S}^T\mathbf{X}_{S}\right)^{-1}\right|\leq 1-\eta,$$
where $\eta$ is a positive constant.
\cite{zhao2006model} showed that this condition is sufficient and almost necessary for lasso to be selection consistent. However, the condition is restrictive and difficult to verify in practice. Here, we will first show that the SPSP procedure is selection consistent under either a much weaker compatibility condition \citep{buhlmann2011statistics} or the restricted eigenvalue condition \citep{bickel2009simultaneous} if we can bound the tuning parameter to an interval of constant width, rather than one that is converging to zero. We further show that under a weak identifiability condition, the SPSP procedure achieves selection consistency for almost all values of the tuning parameter, i.e. the entire solution path. The weak identifiability condition is still weaker than the irrepresentable condition. 
%

We first introduce the following compatibility condition:
\begin{assumption}{\textbf{Compatibility Condition} \citep{buhlmann2011statistics, van2007deterministic}.}
	For some constant $\phi>0$ and for any vector $\boldsymbol{\zeta}$ satisfying $||\boldsymbol{\zeta}_{S^c}||_1\leq 3||\boldsymbol{\zeta}_{S}||_1$, the following compatibility condition holds:
	\bee
	||\boldsymbol{\zeta}_{S}||_1^2\leq \left(\boldsymbol{\zeta^T}\hat\Sigma\boldsymbol{\zeta}\right)s/\phi^2,
	\eee
	where $s=|S|$ is the dimension of $\bbeta_S$.
\end{assumption}

The  compatibility condition is based on the fact that the bias of the lasso estimator $\boldsymbol{\zeta}=\hat{\boldsymbol{\beta}}_L-\boldsymbol{\beta}^\ast$ satisfies $||\boldsymbol{\zeta}_{S^c}||_1\leq 3||\boldsymbol{\zeta}_{S}||_1$ with a probability close to $1$ \citep{bickel2009simultaneous, buhlmann2011statistics}.
Hence we can restrict ourselves to such vectors in the condition.
Several similar assumptions have also been proposed to establish the consistency property of the lasso, such as the restricted eigenvalue condition \citep{bickel2009simultaneous}, the restricted isometry condition \citep{candes2005decoding}, and the coherence condition \citep{bunea2007sparsity}. The relations among these conditions can be found in \citet{buhlmann2011statistics}. Under the compatibility condition, we can bound both the bias and the prediction error of the lasso:

\begin{lemma}\label{cc}
	\citep{buhlmann2011statistics} Suppose that the compatibility condition holds, and let $\lambda_0=2\sigma\sqrt{\frac{t^2+2\log p}{n}}$ for any $t>0$. Then for $\lambda\geq 2\lambda_0$, we have
	\bee
	\frac{1}{n}||\mathbf{X}(\boldsymbol{\hat\beta}-\boldsymbol{\beta}^\ast)||_2^2+
	\lambda||\boldsymbol{\hat\beta}-\boldsymbol{\beta}^\ast||_1\leq \frac{4\lambda^2s}{\phi^2}
	\eee
	with a probability of at least $1-2e^{-t^2/2}$.
\end{lemma}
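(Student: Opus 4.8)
The plan is to reproduce the classical lasso oracle-inequality argument, organized so that the prediction error and the $\ell_1$ estimation error come out of a single chain and combine into the stated constant. Write $\boldsymbol{\zeta}=\boldsymbol{\hat\beta}-\boldsymbol{\beta}^\ast$. First I would use the optimality of the lasso estimator: since $\boldsymbol{\hat\beta}$ minimizes \eqref{penalty} with $J(|\beta_j|)=|\beta_j|$, comparing its value to that at the competitor $\boldsymbol{\beta}^\ast$ and substituting $\mathbf{y}=\mathbf{X}\boldsymbol{\beta}^\ast+\bep$ makes the $\frac1n\|\bep\|_2^2$ terms cancel and yields the ``basic inequality''
\[
\frac{1}{n}\|\mathbf{X}\boldsymbol{\zeta}\|_2^2+\lambda\|\boldsymbol{\hat\beta}\|_1\leq \frac{2}{n}\bep^T\mathbf{X}\boldsymbol{\zeta}+\lambda\|\boldsymbol{\beta}^\ast\|_1 .
\]

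Second, I would control the stochastic term. By H\"older's inequality $\frac{2}{n}\bep^T\mathbf{X}\boldsymbol{\zeta}\leq\big(\tfrac2n\max_j|\bep^T\mathbf{x}_j|\big)\|\boldsymbol{\zeta}\|_1$, so everything reduces to the event $\mathcal{T}=\{\tfrac2n\max_j|\bep^T\mathbf{x}_j|\leq\lambda_0\}$. Because the columns are standardized ($\|\mathbf{x}_j\|_2^2=n$), each $\bep^T\mathbf{x}_j/n$ is $\mathcal{N}(0,\sigma^2/n)$; a Gaussian tail bound, a union bound over the $p$ coordinates, and the prescribed $\lambda_0=2\sigma\sqrt{(t^2+2\log p)/n}$ give $P(\mathcal{T}^c)\leq 2e^{-t^2/2}$. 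This is the source of the stated probability, and matching the exponent to the exact form of $\lambda_0$ is the one place in this step that needs care.

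Third, working on $\mathcal{T}$ with $\lambda\geq 2\lambda_0$, I would split the $\ell_1$ norms over $S$ and $S^c$, using $\boldsymbol{\beta}^\ast_{S^c}=\mathbf{0}$ and the reverse triangle inequality, to reach $\frac1n\|\mathbf{X}\boldsymbol{\zeta}\|_2^2+(\lambda-\lambda_0)\|\boldsymbol{\zeta}_{S^c}\|_1\leq(\lambda+\lambda_0)\|\boldsymbol{\zeta}_S\|_1$. Discarding the prediction term shows $\|\boldsymbol{\zeta}_{S^c}\|_1\leq 3\|\boldsymbol{\zeta}_S\|_1$, so $\boldsymbol{\zeta}$ lies in the cone to which the Compatibility Condition applies; writing $\boldsymbol{\zeta}^T\hat\Sigma\boldsymbol{\zeta}=\frac1n\|\mathbf{X}\boldsymbol{\zeta}\|_2^2$ this gives $\|\boldsymbol{\zeta}_S\|_1\leq \sqrt{s}\,\|\mathbf{X}\boldsymbol{\zeta}\|_2/(\sqrt{n}\,\phi)$. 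Adding $(\lambda-\lambda_0)\|\boldsymbol{\zeta}_S\|_1$ to both sides of the displayed inequality collapses the right-hand side to $2\lambda\|\boldsymbol{\zeta}_S\|_1$ and produces
\[
\frac{1}{n}\|\mathbf{X}\boldsymbol{\zeta}\|_2^2+(\lambda-\lambda_0)\|\boldsymbol{\zeta}\|_1\leq 2\lambda\|\boldsymbol{\zeta}_S\|_1 .
\]

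Finally, I would close with Young's inequality. Substituting the compatibility bound into $2\lambda\|\boldsymbol{\zeta}_S\|_1$ and applying $2ab\leq\tfrac12 a^2+2b^2$ with $a=\|\mathbf{X}\boldsymbol{\zeta}\|_2/\sqrt{n}$ and $b=\lambda\sqrt{s}/\phi$ absorbs half of the prediction term, leaving $\tfrac12\cdot\frac1n\|\mathbf{X}\boldsymbol{\zeta}\|_2^2+(\lambda-\lambda_0)\|\boldsymbol{\zeta}\|_1\leq 2\lambda^2 s/\phi^2$. Multiplying by $2$ and using $\lambda-\lambda_0\geq\lambda/2$ (from $\lambda\geq 2\lambda_0$) yields the claimed $\frac1n\|\mathbf{X}\boldsymbol{\zeta}\|_2^2+\lambda\|\boldsymbol{\zeta}\|_1\leq 4\lambda^2 s/\phi^2$. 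I expect the main obstacle to be the constant bookkeeping: a naive treatment that bounds the prediction and $\ell_1$ errors separately yields a looser constant, and it is precisely this particular Young split that recovers the sharp factor $4$. The probabilistic step, by contrast, is routine once the tail bound is aligned with $\lambda_0$.
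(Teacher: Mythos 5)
Your proof is correct: the basic inequality, the control of the noise term on the event $\mathcal{T}$ with the prescribed $\lambda_0$ (yielding exactly the probability $1-2e^{-t^2/2}$ after the union bound), the cone condition $\|\boldsymbol{\zeta}_{S^c}\|_1\leq 3\|\boldsymbol{\zeta}_S\|_1$, the compatibility step, and the final Young split all check out and deliver the stated constant $4\lambda^2 s/\phi^2$. Note that the paper itself gives no proof of this lemma---it is quoted from \citet{buhlmann2011statistics}---and your argument is essentially the standard proof of Theorem 6.1 in that reference, differing only in trivial bookkeeping (you carry $\lambda_0$ symbolically until the last line, whereas the book substitutes $\lambda_0\leq\lambda/2$ at the outset).
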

%

This lemma implies the bound for the prediction error and
and the following bound for the $l_1$-error of the lasso estimator:
$$||\boldsymbol{\hat\beta}-\boldsymbol{\beta}^\ast||_1\leq \frac{4\lambda s}{\phi^2}.$$ The compatibility condition required to bound the above errors is substantially weaker than the irrepresentable condition, which is necessary for achieving consistent variable selection under the currently used framework of choosing a single $\lambda$. \cite{buhlmann2011statistics} showed that the irrepresentable condition actually implies the compatibility condition.

With the proposed SPSP procedure, we can accomplish consistent variable selection without the irrepresentable condition. This is because at each $\lambda$, we cluster the lasso estimators into two groups rather than labeling all the variables with non-zero coefficient estimates as important features. Consequently, we only need to bound the bias of the lasso estimators rather than shrink some coefficients to zeros. In what follows, we will first introduce some necessary notation and then present {\color{red}Theorem \ref{compact}}. The theorem shows that when $\lambda$ is not too large, the SPSP procedure identifies the true relevant set $S$ with a probability close to 1 with only the compatibility condition.

Let $\delta_\lambda=\frac{4\lambda s}{\phi^2}$ and $\delta_0=\delta_{\lambda_0},$ where $\lambda_0$ is defined as in Lemma 1.
We first sort the absolute values of the true non-zero coefficients {\color{red} $\{b^*_{j}=|\beta^*_j|, j\in S\}$} in ascending order to get {\color{red} $b^*_{(1)}, \dots, b^*_{(s)}$}, and we define the true adjacent distances as
{\color{red} $D_0=0, D_1=b^*_{(1)}, D_2=  b^*_{(2)} -b^*_{(1)}, \dots, D_s=b^*_{(s)} -b^*_{(s-1)}.$}
Let $C_0=\sqrt{\frac{D_{\max}+\delta_0}{\delta_0}}-1$, $D_{\max}=\max_{1 \leq i \leq p }\{D_i\}$,
and
$$
C=\frac{D_{\max}}{\min\{b^*_i: b^*_i>(2+C_0)\delta_0\}}.
$$
Moreover, let $$ C_{under}^{i}= \frac{D_{i}}{\max\{D_{i'}: i' < i\}}$$ for $i=2,\dots,s$ and $C_{under}^1=\infty,$ and $$ C_{upper}^{i}= \frac{D_{i}}{\max\{D_{i'}: i' > i\}}$$
for $i=1,\dots,s-1$ and $C_{upper}^s=\infty.$
In addition, let $R=1+C$. {We also sort the absolute values of the estimators from lasso in ascending order to get {\color{red} $\hat b_{(1)}, \dots, \hat b_{(p)}$}, and define the adjacent distances of the lasso estimators as
	{\color{red} $\hat D_1=\hat b_{(1)}, \hat D_2=  \hat b_{(2)} -\hat b_{(1)}, \dots, \hat D_p=\hat b_{(p)} -\hat b_{(p-1)}.$}

	\begin{theorem}
		\label{compact}
		Let
		$i_\lambda=
		\min\{i:C_{under}^i \geq R, C_{upper}^i \geq \frac{1}{C}, D_i>(1-\frac{R}{C_{under}^i})^{-1}(1+R)\delta_\lambda\} 
		$, and {\color{red} $S_{\lambda}=\{j: b^*_j \geq b^*_{(i_\lambda)}\}$}. Under the compatibility condition, if $\lambda > 2\lambda_0$, the following inequalities hold for the lasso estimator  with a probability of at least $1-2e^{-t^2/2}$: 	
		%
		%
		%
		%
		%
		\begin{align*}
		\frac{ \hat D(S_\lambda,S_\lambda^c)}{ \hat D_{\max}(S^c_\lambda)}&> R,\\
		\frac{\hat D_{\max}(S_\lambda)}{ \hat D(S_\lambda,S_\lambda^c)}&\leq R,
		\end{align*}
		where  { $\hat D(S_\lambda,S_\lambda^c) = \hat D_{p -s +i_\lambda}$, $\hat D_{\max}(S^c_\lambda) = \max\{\hat D_j:1\leq j < p -s +i_\lambda\}$ and $\hat D_{\max}(S_\lambda)=\max\{\hat D_j:p -s +i_\lambda < j\leq p\}$.}
	\end{theorem}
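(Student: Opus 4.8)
The plan is to turn the $\ell_1$ estimation bound of Lemma \ref{cc} into additive control on every adjacent distance, and then to check that the two displayed ratio inequalities collapse, after cancellation, exactly onto the three defining conditions of $i_\lambda$ together with the identity $R=1+C$. First I would record the consequence of Lemma \ref{cc}: for $\lambda>2\lambda_0$, with probability at least $1-2e^{-t^2/2}$ we have $\|\boldsymbol{\hat\beta}-\boldsymbol{\beta}^\ast\|_1\le\delta_\lambda$, and hence, applying the reverse triangle inequality coordinatewise, $\sum_j|\hat b_j-b^*_j|\le\delta_\lambda$. Everything afterwards is deterministic on this event.

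Second, I would relate the sorted estimates to the sorted truth. Write $b^*_{[1]}\le\cdots\le b^*_{[p]}$ for the order statistics of the full vector of true absolute coefficients, so that the bottom $p-s$ of them vanish and $b^*_{[p-s+i]}=b^*_{(i)}$ for $i=1,\dots,s$. The one-dimensional rearrangement inequality gives $\sum_j|\hat b_{(j)}-b^*_{[j]}|\le\sum_j|\hat b_j-b^*_j|\le\delta_\lambda$; that is, matching the two sequences in common sorted order only decreases the $\ell_1$ discrepancy. Writing $\Delta_j:=b^*_{[j]}-b^*_{[j-1]}$ for the true adjacent distance at position $j$, the decisive observation is that for every single $j$,
$$|\hat D_j-\Delta_j|\le|\hat b_{(j)}-b^*_{[j]}|+|\hat b_{(j-1)}-b^*_{[j-1]}|\le\delta_\lambda,$$
where the last step holds because the two summands are distinct nonnegative terms of a sum that is itself bounded by $\delta_\lambda$. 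Since $\Delta_{p-s+i}=D_i$ while $\Delta_j=0$ for $j\le p-s$, this yields $\hat D(S_\lambda,S_\lambda^c)=\hat D_{p-s+i_\lambda}\ge D_{i_\lambda}-\delta_\lambda$, together with $\hat D_{\max}(S_\lambda^c)\le\max\{D_i:i<i_\lambda\}+\delta_\lambda$ and $\hat D_{\max}(S_\lambda)\le\max\{D_i:i>i_\lambda\}+\delta_\lambda$.

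Third, I would substitute these three bounds into the target ratios and simplify. For the first inequality, writing $\max\{D_i:i<i_\lambda\}=D_{i_\lambda}/C_{under}^{i_\lambda}$, the requirement $\hat D(S_\lambda,S_\lambda^c)>R\,\hat D_{\max}(S_\lambda^c)$ rearranges to $D_{i_\lambda}(1-R/C_{under}^{i_\lambda})>(1+R)\delta_\lambda$, which is precisely the third defining condition of $i_\lambda$, the condition $C_{under}^{i_\lambda}\ge R$ ensuring the factor $1-R/C_{under}^{i_\lambda}$ is positive so the inversion is legitimate. For the second inequality, using $\max\{D_i:i>i_\lambda\}=D_{i_\lambda}/C_{upper}^{i_\lambda}\le C\,D_{i_\lambda}$ from $C_{upper}^{i_\lambda}\ge 1/C$, the requirement $\hat D_{\max}(S_\lambda)\le R\,\hat D(S_\lambda,S_\lambda^c)$ with $R=1+C$ reduces to $D_{i_\lambda}\ge(1+R)\delta_\lambda$, which again follows from the third condition since $(1-R/C_{under}^{i_\lambda})^{-1}\ge 1$. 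The degenerate cases $i_\lambda=1$ and $i_\lambda=s$ are absorbed by the conventions $C_{under}^1=\infty$ and $C_{upper}^s=\infty$ (empty maxima set to $0$).

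The main obstacle I anticipate is the second step, and specifically securing the \emph{sharp} per-gap slack of $\delta_\lambda$: the naive order-statistic perturbation bound $|\hat b_{(j)}-b^*_{[j]}|\le\delta_\lambda$ applied to each endpoint separately would only give $2\delta_\lambda$, forcing the thresholds to carry a factor $2(1+R)$ and breaking the exact match with the definition of $i_\lambda$. Justifying the rearrangement inequality $\sum_j|\hat b_{(j)}-b^*_{[j]}|\le\sum_j|\hat b_j-b^*_j|$ and then exploiting that any two terms of this bounded sum are jointly at most $\delta_\lambda$ is exactly what makes the constants fit; the remaining algebra in the third step is purely mechanical, engineered so that each displayed ratio collapses onto one of the conditions defining $i_\lambda$.
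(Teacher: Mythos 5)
Your proof is correct, and it rests on the same probabilistic core as the paper's: the only stochastic input is Lemma \ref{cc}, which for $\lambda>2\lambda_0$ gives $\|\hat{\boldsymbol\beta}-\boldsymbol\beta^\ast\|_1\le\delta_\lambda$ on an event of probability at least $1-2e^{-t^2/2}$, after which both arguments are deterministic perturbation bounds on adjacent gaps matched against the defining conditions of $i_\lambda$ and the identity $R=1+C$. The execution differs in ways worth recording, all in your favor. The paper never passes to sorted positions explicitly: it bounds $\left||\hat\beta_{j_1}|-|\hat\beta_{j_2}|\right|\ge D_{i_\lambda}-\delta_\lambda$ for pairs $j_1\in S_\lambda$, $j_2\in S_\lambda^c$ (via the same ``two distinct coordinates of an $\ell_1$-bounded sum'' trick you isolate) and leaves implicit that positivity of this separation forces the top $|S_\lambda|$ order statistics to be exactly the estimates of variables in $S_\lambda$, so that the positional gap $\hat D_{p-s+i_\lambda}$ inherits the bound; your rearrangement step $\sum_j|\hat b_{(j)}-b^*_{[j]}|\le\sum_j|\hat b_j-b^*_j|\le\delta_\lambda$ yields the uniform positional control $|\hat D_j-\Delta_j|\le\delta_\lambda$ directly, which is exactly the form the theorem (stated purely in terms of $\hat D_j$ at fixed sorted positions) requires, and is the more rigorous route. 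For the second ratio, the paper bounds $\hat D_{\max}(S_\lambda)\le D_{\max}+\delta_\lambda$ with the \emph{global} maximal gap and then appeals to the constant $C$, a step whose algebra implicitly requires $D_{\max}\le C D_{i_\lambda}$ and hence breaks down when $C<1$; you instead invoke $C_{upper}^{i_\lambda}\ge 1/C$, i.e.\ $\max\{D_i:i>i_\lambda\}\le C D_{i_\lambda}$, which is that condition's evident purpose and makes the reduction to $D_{i_\lambda}\ge(1+R)\delta_\lambda$ clean in all cases. Finally, your bookkeeping with $D_{i_\lambda}-\delta_\lambda$ rather than $D_{i_\lambda}-2\delta_\lambda$ is what makes the constants collapse exactly onto the definition of $i_\lambda$: the paper's displayed first ratio puts $D_{i_\lambda}-2\delta_\lambda$ in the numerator, which would require $D_{i_\lambda}(1-R/C_{under}^{i_\lambda})>(2+R)\delta_\lambda$ and therefore does not literally follow from the stated condition; the sharper joint bound you flagged as the main obstacle is precisely what repairs this slack.
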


Let $\hat{S}_{\lambda}$ denote the important features identified at $\lambda$. Theorem \ref{compact} implies that $
P(\hat{S}_\lambda=S_\lambda) > 1-2e^{-t^2}.$
When the tuning parameter $\lambda$ is bounded by $\min_{j \in S}|\beta^*_j|> (1+R)4\lambda s/\phi^2$, we recover $S$ exactly with a probability close to 1:
\[
P(\hat{S}_\lambda=S) >1-2e^{-t^2}.
\]
Further, we conclude that for larger values of $\lambda$, i.e. such that $\max_{j \in S}|\beta^*_j|>(1+R)4\lambda s/\phi^2 \geq \min_{j \in S}|\beta^*_j|$, there are no false positive signals in $\hat S_\lambda$,
$
P(\hat{S}_\lambda\subset S) >1-2e^{-t^2}.
$
The following theorem then follows immediately from the fact that our SPSP estimator is $\hat S=\cup_\lambda \hat S_\lambda$:

\begin{theorem}
	\label{small}
	Let
	$i_{2\lambda_0}=
	\min\{i:C_{under}^i \geq R, C_{upper}^i \geq \frac{1}{C}, D_i>(1-\frac{R}{C_{under}^i})^{-1}(1+R)2\delta_{\lambda_0}\} 
	$  and let {\color{red} $S_{2\lambda_0}=\{j: b^*_j\geq b^*_{(i_{2\lambda_0})}\}$}. Under the compatibility condition, the SPSP estimator $\hat S$ over $\lambda \in [2\lambda_0, \frac{\phi^2 D_{\max}}{4s(1+R)})$ recovers $S_{2\lambda_0}$ with a probability of at least $1-2e^{-t^2}$:
	\[
	P(\hat S= S_{2\lambda_0}) > 1-2e^{-t^2}.
	\]
	In particular,  when $\min_{j \in S}|\beta^*_j|>(1+R)2\delta_0$,
	\[
	P(\hat S= S) > 1-2e^{-t^2}.
	\]
\end{theorem}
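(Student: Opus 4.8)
The plan is to lean entirely on the defining structure of the SPSP estimator, $\hat S = \bigcup_\lambda \hat S_\lambda$, together with the per-$\lambda$ identification already recorded just before the statement, namely $\hat S_\lambda = S_\lambda$ with high probability for every $\lambda > 2\lambda_0$. Once this is in hand, the whole argument collapses to two deterministic facts: that the identification events of Theorem \ref{compact} all sit on a single common high-probability event, and that the family $\{S_\lambda\}$ is nested. Taking the union is then immediate.

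First I would pin down the probability space. Lemma \ref{cc} is really a deterministic oracle inequality valid on the single event $\mathcal{A}=\{\,2\|\tfrac1n\mathbf{X}^T\bep\|_\infty\le\lambda_0\,\}$, whose probability is at least $1-2e^{-t^2/2}$, and on which the $\ell_1$-bound $\|\hat{\boldsymbol\beta}-\boldsymbol\beta^\ast\|_1\le 4\lambda s/\phi^2=\delta_\lambda$ holds \emph{simultaneously} for every $\lambda\ge 2\lambda_0$. Since the adjacent-distance inequalities of Theorem \ref{compact} are derived purely from this bound, and those inequalities are exactly the SPSP selection rules \eqref{rule1}--\eqref{rule2} evaluated at the true boundary, the identification $\hat S_\lambda=S_\lambda$ also holds on $\mathcal{A}$ for all $\lambda$ at once. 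This is the step I expect to be the crux: it is what lets us avoid a union bound over the continuum of tuning parameters, which would otherwise destroy the probability estimate. The entire appeal of SPSP is that one good noise event controls the whole path.

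Next I would establish the nesting of the sets $S_\lambda=\{j:b^*_j\ge b^*_{(i_\lambda)}\}$. The index $i_\lambda$ is the smallest $i$ meeting three conditions, of which only the last, $D_i>(1-R/C_{under}^i)^{-1}(1+R)\delta_\lambda$, depends on $\lambda$; because $\delta_\lambda=4\lambda s/\phi^2$ is increasing in $\lambda$ while the factor $(1-R/C_{under}^i)^{-1}$ is positive (using $C_{under}^i\ge R$), raising $\lambda$ can only shrink the feasible set of indices, so $i_\lambda$ is non-decreasing in $\lambda$. Hence the cutoff $b^*_{(i_\lambda)}$ is non-decreasing and the sets are nested, $S_\lambda\subseteq S_{2\lambda_0}$ for every $\lambda$ in the range, with equality at the left endpoint. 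The upper limit $U=\phi^2 D_{\max}/(4s(1+R))$ enters only to keep $i_\lambda$ well defined throughout: for $\lambda<U$ we have $(1+R)\delta_\lambda<D_{\max}$, so the index attaining $D_{\max}$ keeps the feasible set nonempty and $S_\lambda$ nontrivial.

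Finally I would assemble the pieces. On $\mathcal{A}$,
\[
\hat S=\bigcup_{\lambda\in[2\lambda_0,U)}\hat S_\lambda=\bigcup_{\lambda\in[2\lambda_0,U)} S_\lambda=S_{2\lambda_0},
\]
the last equality because the union of a nested decreasing family equals its largest member, attained at $\lambda=2\lambda_0$. Thus $P(\hat S=S_{2\lambda_0})\ge P(\mathcal{A})$, the probability guaranteed by Lemma \ref{cc}. For the ``in particular'' claim, note $\delta_{2\lambda_0}=2\delta_0$, so the hypothesis $\min_{j\in S}|\beta^*_j|>(1+R)2\delta_0$ is precisely the strong-signal condition identified before the theorem: since $D_1=b^*_{(1)}=\min_{j\in S}|\beta^*_j|$ and $C_{under}^1=\infty$, it forces $i_{2\lambda_0}=1$, whence $S_{2\lambda_0}=S$ and $P(\hat S=S)>1-2e^{-t^2}$. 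The only routine checks left are the monotonicity bookkeeping for $i_\lambda$ and the verification that $i_{2\lambda_0}=1$ (in particular that $C_{upper}^1\ge 1/C$) under the stated signal strength; neither poses a genuine difficulty.
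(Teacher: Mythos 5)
Your proof is correct and follows essentially the same route as the paper's: invoke Theorem \ref{compact} at each $\lambda$ in the range, use $\hat S=\bigcup_\lambda \hat S_\lambda$, and observe that the sets $S_\lambda$ are nested (via monotonicity of $i_\lambda$ in $\lambda$) so that the union collapses to its largest member $S_{2\lambda_0}$. The paper's own proof is only two sentences, and your explicit treatment of the single common noise event (which avoids a union bound over the continuum of tuning parameters) and of the nesting argument supplies exactly the details the paper leaves implicit.
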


{\color{red}Theorem \ref{small}} suggests that the proposed SPSP procedure is consistent for variable selection under only the compatibility condition. With Theorem \ref{small}, we require that the tuning parameter $\lambda$ is not larger than $\frac{\phi^2 D_{\max}}{4s(1+R)},$ the lower bound of which can be obtained with prior information. When no such information is available, we would need SPSP estimators over larger values of $\lambda$ to not select any spurious variables. This is easy to verify in practice, since we can simply examine whether any new variables enter into the relevant set for larger values of $\lambda$. However, in order to theoretically guarantee such behavior of the solution paths, we need an additional condition, which is still substantially weaker than the irrepresentable condition:

\begin{assumption}{\textbf{Identifiability Condition}}
	Let $\eta>0$ be some constant. For any $\bar{\bbeta}= (\bar\bbeta_{S}, \bar{\bbeta}_{S^C})$, the following identifiability condition holds:
	\be
	\label{iden}
	\|\bX\bbeta^\ast-\bX_{S}\bar\bbeta_{S}-\bX_{S^C}\bar\bbeta_{S^C}\|^2 \geq \min_{\|\bbeta_S\|_1 \leq \|\bar\bbeta_S\|_1+ (1-\eta)\|\bar{\bbeta}_{S^c}\|_1}\| \bX\bbeta^\ast-\bX_{S}\bbeta_{S}  \|^2.
	\ee
\end{assumption}

The identifiability condition indicates that with the true set of relevant variables,  we can approximate the noiseless response $\mathbf{X}\bbeta^\ast$ at least as well as with any other set of variables under almost the same $l_1$ constraint. It is not difficult to verify that the condition is weaker than the irrepresentable condition.

\begin{lemma}
	The irrepresentable condition implies the identifiability condition.
\end{lemma}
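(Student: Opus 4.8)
The plan is to prove the implication directly. Given the irrepresentable condition, for an arbitrary $\bar\bbeta=(\bar\bbeta_S,\bar\bbeta_{S^c})$ I would exhibit a single vector $\bbeta_S$ supported on $S$ that is feasible for the minimization on the right-hand side of \eqref{iden} and whose residual does not exceed the left-hand side; since the right-hand side is an infimum over all feasible $\bbeta_S$, producing one such competitor suffices. At the outset I would note that the irrepresentable condition presupposes that $\bX_S^T\bX_S$ is invertible, which is what makes the objects below well defined.

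First I would absorb the irrelevant part into the relevant columns by orthogonal projection. Write $\bgamma=(\bX_S^T\bX_S)^{-1}\bX_S^T\bX_{S^c}\bar\bbeta_{S^c}$ and set $\bbeta_S=\bar\bbeta_S+\bgamma$, so that $\bX_S\bbeta_S=\bX_S\bar\bbeta_S+P_S\bX_{S^c}\bar\bbeta_{S^c}$, where $P_S=\bX_S(\bX_S^T\bX_S)^{-1}\bX_S^T$ is the projection onto the column space of $\bX_S$. Because $\bbeta^\ast$ is supported on $S$, both $\bX\bbeta^\ast$ and $\bX_S\bar\bbeta_S$ lie in the column space of $\bX_S$, so the vector $v=\bX\bbeta^\ast-\bX_S\bar\bbeta_S$ does too. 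Splitting $\bX_{S^c}\bar\bbeta_{S^c}$ into its projection $P_S\bX_{S^c}\bar\bbeta_{S^c}$ and its orthogonal complement, Pythagoras gives
\[
\|\bX\bbeta^\ast-\bX_S\bar\bbeta_S-\bX_{S^c}\bar\bbeta_{S^c}\|^2
=\|\bX\bbeta^\ast-\bX_S\bbeta_S\|^2+\|(I-P_S)\bX_{S^c}\bar\bbeta_{S^c}\|^2
\geq \|\bX\bbeta^\ast-\bX_S\bbeta_S\|^2,
\]
which is exactly the residual inequality needed. Notably, this step uses only that $\bbeta^\ast$ is supported on $S$ and not the irrepresentable condition at all.

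The role of the irrepresentable condition is solely to certify feasibility, i.e. $\|\bbeta_S\|_1\leq\|\bar\bbeta_S\|_1+(1-\eta)\|\bar\bbeta_{S^c}\|_1$. By the triangle inequality $\|\bbeta_S\|_1\leq\|\bar\bbeta_S\|_1+\|\bgamma\|_1$, so it remains to bound $\|\bgamma\|_1$. Writing $M=(\bX_S^T\bX_S)^{-1}\bX_S^T\bX_{S^c}$ so that $\bgamma=M\bar\bbeta_{S^c}$, I would use $\|M\bar\bbeta_{S^c}\|_1\leq\|M\|_{1\to1}\|\bar\bbeta_{S^c}\|_1$, where $\|M\|_{1\to1}$ is the maximum absolute column sum of $M$. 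Since $M^T=\bX_{S^c}^T\bX_S(\bX_S^T\bX_S)^{-1}$ is precisely the matrix in the irrepresentable condition (the $1/n$ factors cancel in the product), the bound $\|M\|_{1\to1}\leq 1-\eta$ is exactly the statement that every row of $\bX_{S^c}^T\bX_S(\bX_S^T\bX_S)^{-1}$ has $\ell_1$-norm at most $1-\eta$. Hence $\|\bgamma\|_1\leq(1-\eta)\|\bar\bbeta_{S^c}\|_1$, feasibility follows, and combining with the residual inequality completes the argument.

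The main obstacle is interpretive rather than technical: the argument hinges on reading the matrix inequality in the irrepresentable condition as a bound on the $\ell_1\to\ell_1$ operator norm of $M$ (equivalently, the maximum row-wise $\ell_1$-norm of $\bX_{S^c}^T\bX_S(\bX_S^T\bX_S)^{-1}$), so I would make the correspondence between the condition's notation and that operator norm explicit before invoking it. Everything else reduces to the projection computation, which is elementary once the competitor $\bbeta_S=\bar\bbeta_S+\bgamma$ is chosen.
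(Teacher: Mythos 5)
Your proof is correct and takes essentially the same route as the paper: both arguments construct a competitor supported on $S$ by adding the projection coefficients $(\bX_S^T\bX_S)^{-1}\bX_S^T\bX_{S^c}\bar\bbeta_{S^c}$ to $\bar\bbeta_S$, bound its residual by the left-hand side via a projection inequality, and invoke the irrepresentable condition only to certify the $\ell_1$ feasibility constraint $\|\bbeta_S\|_1\leq\|\bar\bbeta_S\|_1+(1-\eta)\|\bar\bbeta_{S^c}\|_1$. If anything, your version is more careful than the paper's: the paper states the projection inequality without the Pythagoras justification and phrases the $\ell_1$ bound through sign-vector quantities such as $\mbox{diag}(\sign(\bbeta_S^\ast))\sign(\bbeta_S^\ast)(\bX_S^T\bX_S)^{-1}\bX_S^T\bX_{S^c}\bar\bbeta_{S^c}$, whereas your explicit reading of the condition as a bound on the $\ell_1\to\ell_1$ operator norm of $(\bX_S^T\bX_S)^{-1}\bX_S^T\bX_{S^c}$ (maximum row-wise $\ell_1$ norm of its transpose) is precisely the form needed to bound $\|\bgamma\|_1$ and makes the feasibility step airtight.
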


On the right hand side of \eqref{iden}, the coefficients of the irrelevant variables are set as $\bbeta_{S^C}=0.$ In fact, we can further weaken the identifiability condition  by allowing $\bbeta_{S^c}$ to be non-zero on the right-hand side of \eqref{iden}. Instead, we only require $\|\bbeta_{S^c}\|_1$ to be smaller than  $\|\bbeta_S\|_1$ up to a constant $k$. On top of that, we also relax the inequality \eqref{iden} by taking $\kappa\eta\|\hat{\bbeta}_{S^c}\|_1$  on the right side. As a result, we obtain the following weak identifiability condition:

\begin{assumption}{\textbf{Weak Identifiability Condition}}
	Let $\eta>0$ be some constant. For any $\bar{\bbeta}= (\bar\bbeta_{S}, \bar{\bbeta}_{S^C})$, then for $k=\frac{2}{2s+Rs(s+1)}$ and some $\kappa$ that satisfies 
	$$
	D_{\max} > \lambda_0 \frac{4s(1+R)}{\phi^2}\left\{\frac{Rs^2+(2+R)S+2}{\eta}-1+\kappa \right\},
	$$
	the following weak identifiability condition holds,
	\begin{align}
	\label{weak_iden}
	\|\mathbf{X}\bbeta^\ast-\bX_{S}\bar\bbeta_{S}-\bX_{S^C}\bar\bbeta_{S^C}\|^2 \geq \min_{\bbeta \in \Theta(\|\bar{\bbeta}_S\|_1, \|\bar{\bbeta}_{S^C}\|_1)} \| \bX\bbeta^\ast-\bX\bbeta  \|^2 - \kappa\eta\|\bar{\bbeta}_{S^c}\|_1,
	\end{align}
	where $\Theta(\|\bar{\bbeta}_S\|_1, \|\bar{\bbeta}_{S^C}\|_1)=\{\bbeta = (\bbeta_S, \bbeta_{S^C}): \|\bbeta\|_1 \leq \|\bar{\bbeta}_S\|_1+ (1-\eta)\|\bar{\bbeta}_{S^C}\|_1, \|\bbeta_{S^C}\|_1 \leq k \|\bbeta_{S}\|_1 \}. $ 
\end{assumption}
Henceforth, we refer to the preceding weak identifiability condition with constants $k$ and $\kappa$ as $WIC(k, \kappa).$  Apparently,  $WIC(0, 0)$ is simply the identifiability condition in \eqref{iden}, and $WIC(k, \kappa)$ always implies $WIC(k', \kappa')$ for $k'>k, \kappa' >\kappa$. Therefore Assumption 3 is always weaker than Assumption 2. The above assumption ensures that when $\lambda$ is large, the lasso estimates for the zero coefficients will not be much larger than those for the non-zero coefficients, so that we will not have any false positive signals from our SPSP procedure in $\hat S.$ We combine this consideration with Theorem \ref{compact} to obtain the following result for the entire solution paths under the compatibility condition and $WIC(k, \kappa)$:

\begin{theorem}
	\label{final}
	Suppose that under the compatibility condition and $WIC(k,\kappa)$ with $k=\frac{2}{2s+Rs(s+1)}$, $$
	D_{\max} > \lambda_0 \frac{4s(1+R)}{\phi^2}\left\{\frac{Rs^2+(2+R)S+2}{\eta}-1+\kappa \right\}.
	$$
	Then the SPSP procedure over $\lambda \in [2\lambda_0, \infty)$ identifies $S_{2\lambda_0}=\{j: |\beta^*_j|> (1+R)2\delta_0\}$ with a probability of at least $1-2e^{-t^2},$ i.e.
	$$
	P(\hat S=S_{2\lambda_0}) >1-2e^{-t^2}.
	$$
	
\end{theorem}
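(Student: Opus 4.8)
The plan is to split the tuning-parameter range $[2\lambda_0,\infty)$ at the threshold $\lambda^\ast=\frac{\phi^2 D_{\max}}{4s(1+R)}$ and to analyze the two pieces separately before recombining them through the union $\hat S=\bigcup_\lambda \hat S_\lambda$. On the moderate range $[2\lambda_0,\lambda^\ast)$ I can invoke Theorem \ref{small} verbatim: on the good event underlying Lemma \ref{cc} and Theorem \ref{small} (the noise-control event $\{\|\tfrac1n\bX^T\bep\|_\infty\le\lambda_0/2\}$, of probability at least $1-2e^{-t^2}$), one has $\bigcup_{\lambda\in[2\lambda_0,\lambda^\ast)}\hat S_\lambda=S_{2\lambda_0}$. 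Hence it suffices to prove that adjoining the large range $[\lambda^\ast,\infty)$ adds nothing, i.e. that on the same event $\hat S_\lambda\subseteq S_{2\lambda_0}$ for every $\lambda\ge\lambda^\ast$. Once this is established, $\hat S=S_{2\lambda_0}$ and the stated probability bound follows immediately.

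The heart of the argument, and the place where $WIC(k,\kappa)$ is used, is ruling out false positives for $\lambda\ge\lambda^\ast$. First I would apply \eqref{weak_iden} with $\bar{\bbeta}$ equal to the lasso solution $\hat{\bbeta}$ and let $\bbeta'$ denote the minimizer over $\Theta(\|\hat{\bbeta}_S\|_1,\|\hat{\bbeta}_{S^c}\|_1)$ on the right-hand side. Since $\bbeta'\in\Theta$ satisfies $\|\bbeta'\|_1\le\|\hat{\bbeta}_S\|_1+(1-\eta)\|\hat{\bbeta}_{S^c}\|_1$, comparing the penalized objective at $\hat{\bbeta}$ and at $\bbeta'$ and using optimality of $\hat{\bbeta}$ gives $\lambda\eta\|\hat{\bbeta}_{S^c}\|_1\le\bigl(\|\bX(\bbeta^\ast-\bbeta')\|^2-\|\bX(\bbeta^\ast-\hat{\bbeta})\|^2\bigr)/n+\tfrac1n\,2\bep^T\bX(\hat{\bbeta}-\bbeta')$. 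The first difference is at most $\kappa\eta\|\hat{\bbeta}_{S^c}\|_1$ by \eqref{weak_iden}, while the cross term is bounded by $\lambda_0\|\hat{\bbeta}-\bbeta'\|_1$ on the noise event; the combinatorial factor $\frac{Rs^2+(2+R)s+2}{\eta}$ arises precisely from expressing $\|\hat{\bbeta}-\bbeta'\|_1$ through $\|\hat{\bbeta}_{S^c}\|_1$ and the $l_1$-bound $\|\hat{\bbeta}-\bbeta^\ast\|_1\le\delta_\lambda$ of Lemma \ref{cc}, together with the constraint $\|\bbeta'_{S^c}\|_1\le k\|\bbeta'_S\|_1$ with $k=\tfrac{2}{2s+Rs(s+1)}$. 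Collecting the terms multiplying $\|\hat{\bbeta}_{S^c}\|_1$, the lower bound imposed on $D_{\max}$ guarantees that for $\lambda\ge\lambda^\ast$ the coefficient $\lambda\eta$ on the left strictly dominates, so the inequality forces $\|\hat{\bbeta}_{S^c}\|_1$ to be small; in particular $\max_{j\in S^c}\hat b_j$ is controlled.

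With $\max_{j\in S^c}\hat b_j$ and the largest adjacent distance $\hat D_{\max}(S^c_\lambda)$ among the zero coefficients both driven below the noise floor, I would then check that the SPSP clustering at $\lambda$ cannot promote any index of $S^c$. The surviving true-signal estimates in $S_{2\lambda_0}$ stay bounded away from zero by the $D_{\max}$ condition, so the zero-coefficient estimates are too small to meet the separation rules \eqref{rule1}--\eqref{rule2}; consequently Step 2.3 never moves an $S^c$ index into $\hat S_\lambda$, giving $\hat S_\lambda\cap S^c=\emptyset$. Moreover, the estimates of any $j\in S\setminus S_{2\lambda_0}$ are shrunk even more heavily at $\lambda\ge\lambda^\ast$ than at $\lambda^\ast$ and hence remain below threshold, so no weak true signal reappears either. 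This yields $\hat S_\lambda\subseteq S_{2\lambda_0}$ for all $\lambda\ge\lambda^\ast$, and combining with the moderate range gives $P(\hat S=S_{2\lambda_0})>1-2e^{-t^2}$.

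The main obstacle is the second paragraph: converting the functional inequality \eqref{weak_iden} into a genuine scalar bound on $\|\hat{\bbeta}_{S^c}\|_1$, and verifying that the constants $k$, $\kappa$, and the threshold on $D_{\max}$ are aligned so that the resulting bound is strictly smaller than the gap separating $S_{2\lambda_0}$ from the zero region in the sorted-estimate picture. Propagating the slack $\kappa\eta$ and the factor $\frac{Rs^2+(2+R)s+2}{\eta}$ through the basic inequality while keeping the $\lambda$-dependence on the correct side is the delicate bookkeeping; once the bound on $\|\hat{\bbeta}_{S^c}\|_1$ is in hand, the reduction to the SPSP rules and the union over $\lambda$ are routine.
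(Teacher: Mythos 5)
Your architecture is the same as the paper's: split $[2\lambda_0,\infty)$ at $\lambda^\ast=\frac{\phi^2 D_{\max}}{4s(1+R)}$, invoke Theorem \ref{small} on the moderate range, and use $WIC(k,\kappa)$ through the lasso basic inequality to rule out false positives on the large range; your second paragraph is essentially the paper's Lemma \ref{weak}, and you correctly identify that the lower bound on $D_{\max}$ serves to place the WIC threshold $\lambda_0\bigl(\frac{2+2k-k\eta}{k\eta}+\kappa\bigr)$ below $\lambda^\ast$ so the two ranges cover $[2\lambda_0,\infty)$. The genuine gap is in your third paragraph, where you convert the $l_1$ control into a statement about the SPSP clustering. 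You argue that $\|\hat{\bbeta}_{S^c}\|_1$ and $\max_{j\in S^c}\hat b_j$ are driven ``below the noise floor'' while the estimates of $j\in S_{2\lambda_0}$ ``stay bounded away from zero,'' so that rules \eqref{rule1}--\eqref{rule2} cannot promote an $S^c$ index. This absolute-magnitude mechanism fails: for lasso with $\lambda$ large enough every coordinate of $\hat{\bbeta}$ is shrunk exactly to zero, so no signal estimate stays bounded away from zero on all of $[\lambda^\ast,\infty)$; and the WIC-based basic inequality does not deliver an absolute bound in the first place --- what comes out (by contradiction, as in Lemma \ref{weak}) is only the \emph{relative} bound $\|\hat{\bbeta}_{S^c}\|_1\leq k\|\hat{\bbeta}_S\|_1$, which says nothing about absolute size.

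The missing idea is the paper's Lemma \ref{small2}: a scale-invariant counting argument tuned exactly to the ratio form of the SPSP rules. Let $\hat\beta^{\max}_{S^c}=\max_{j\in S^c}|\hat\beta_j|$ and let $\Delta_1,\dots,\Delta_d$ be the adjacent gaps among the estimates lying above $\hat\beta^{\max}_{S^c}$. Telescoping gives $\|\hat{\bbeta}_S\|_1\leq s\hat\beta^{\max}_{S^c}+\sum_{i}i\Delta_i$, so the relative bound yields $\hat\beta^{\max}_{S^c}\leq ks\hat\beta^{\max}_{S^c}+k\frac{s(s+1)}{2}\Delta_{\max}$, and the specific choice $k=\frac{2}{2s+Rs(s+1)}$ produces $\hat\beta^{\max}_{S^c}\leq\frac{1}{R}\Delta_{\max}$. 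This inequality holds no matter how much the whole path has shrunk, and because the SPSP rules compare adjacent distances rather than magnitudes, it is precisely what prevents Step 2.3 from ever moving an $S^c$ index into $\hat S_\lambda$. In your write-up, $k$ appears only as bookkeeping inside the basic inequality and is never used for this conversion --- which is the step you yourself flag as unresolved ``delicate bookkeeping.'' (A smaller issue, shared with the paper: to get $\hat S=S_{2\lambda_0}$ exactly one must also keep indices of $S\setminus S_{2\lambda_0}$ out of $\hat S_\lambda$ for $\lambda\geq\lambda^\ast$; your appeal to monotone shrinkage is heuristic, since lasso paths need not be monotone, and the paper's proof of Theorem \ref{final} is equally silent on this point, its Lemma \ref{small2} asserting only $\hat S_\lambda\subset S$.)
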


When the true values of the coefficients are of a higher order than $\sqrt{\frac{\log p} { n}}$, it follows immediately from {\color{red}Theorem \ref{final}} that the asymptotic probability of identifying the true relevant set $S$ is 1. Here we only need the tuning parameter $\lambda$ to be not too small: $\lambda>2\lambda_0$; unlike the existing literature, we do not require the tuning parameter $\lambda$ to be in a specific region. In fact, we do not need consistent variable selection for any value of $\lambda$. We only need to control the bias of the lasso estimators for smaller $\lambda$s and to control the $l_1$ norm of the lasso estimators of those zero coefficients for larger $\lambda$s; both of these results require weaker conditions than achieving selection consistency at certain values of $\lambda$. By combining these results, the SPSP procedure can accomplish feature selection consistency under substantially weaker conditions, without a proper choice of the tuning parameter.

{
	\color{red}
The theoretical results for other penalty functions can be developed following the paradigm as described above.  We can derive Theorem \ref{small} given the bias of the penalized estimators and the corresponding condition. For the non-convex penalties, the bias is smaller than lasso. For $l_2$ penalty, \cite{shaoEstimationHighdimensionalLinear2012} gives the bias and convergence rate for high-dimensional ridge regression. To obtain Theorem \ref{final}, we need a similar identifiability condition as those in \eqref{iden} and \eqref{weak_iden}. However, the constraint on the $l_1$ norm needs to be replaced by the constraint on $l_2$ norm for ridge regression, and by the corresponding function form for the non-convex penalties.

}

\section{Simulation Studies}
All of the following simulations are generated from the linear model \eqref{lm} with $x_{ij}\sim N(0,1),i=1,\dots,n,j=1,\dots,p$, and $\epsilon_i \sim N(0,\sigma^2)$. The details of the simulation setups are as follows:

\begin{enumerate}
	
	\item[(M1)]  (Moderate Correlation, $p>n$) \citep{tibshirani1996regression} Let $\beta_1^\ast=3,\beta_2^\ast=1.5,$ and $\beta_5^\ast=2$, and let the remaining coefficients equal zero. The correlation between $x_{j_1}$ and $x_{j_2}$ is $0.5^{|j_1-j_2|}$. We set $n=50$, $p=100$, and $\sigma =3$.
	
	\item[(M2)]  (Moderate Correlation, $p>>n$) The setup here is the same as (M1), except that $p=1000$. 
	
	
	\item[(M3)] (High Correlation, $p>n$) \citep{wang2011random} Let $\boldsymbol{\beta}^\ast=(3,3,-2,3,3,-2,0,\dots,0)$, so that the first $6$ coefficients are non-zero and the remaining $94$ coefficients equal zero.  The pairwise correlation between the first $3$ variables is $0.9$, the pairwise correlation between the second $3$ variables is also $0.9$, and the remaining $94$ variables are independent of each other. Furthermore, the first $3$ variables, the second $3$ variables, and the remaining $94$ variables are independent of each other. We set $n=50$ and $\sigma=3$.
	%
	
	\item[(M4)] (Misspecified Model, $p>n$) \citep{lv2014model} The true values of the coefficients are $\boldsymbol{\beta}^\ast=(1,-1.25,0.75,-0.95,1.5,0,\dots,0),$ where $p=100$, the first 5 coefficients are non-zero, and the remaining coefficients equal zero. All of the variables are independent of each other, i.e the rows of $\mathbf{X}$ are generated from $N(\mathbf{0}, \mathbf{I}_p)$. We set $n=50$ and $\sigma=1$. Here the response $\mathbf{y}$ is generated as
	$\mathbf{y}=\mathbf{X}\boldsymbol{\beta^\ast}+\mathbf{x}_{p+1}+\bep,$
	where $\mathbf{x}_{p+1}=\mathbf{x}_1\circ\mathbf{x}_2$ is an interaction term between $x_1$ and $x_2$. However, ${x}_{p+1}$ is not a candidate variable, and therefore the true model is not contained in the candidate models. 
\end{enumerate}
%

For all simulation examples, we conduct the SPSP procedure on the solution paths for lasso, adaptive lasso, SCAD, and MCP. One compelling advantage of the proposed SPSP procedure is that it can be applied for the $l_2$ penalty. Therefore, we also implement the SPSP algorithm with ridge regression for these examples. 
We compute the solution paths for convex penalties  with  the $\mathbf{R}$ package \textit{glmnet} and those for SCAD and MCP with the $\mathbf{R}$ package \textit{ncvreg}. All of the solution paths are computed over a grid of $K=100$ values of the tuning parameter $\lambda$. 


In Tables 2--5, we compare the performance of the proposed SPSP procedure with the $10$-fold cross-validation, generalized cross-validation, AIC, BIC, extended BIC and stability selection over $500$ replicates
for each setup. We record the following measures: FP, the number of false positives, FN, the number of false negatives, and ME, the Model Error = $(\hat{\boldsymbol{\beta}}-\boldsymbol{\beta}^\ast)^T\hat{\boldsymbol{\Omega}}(\hat{\boldsymbol{\beta}}-\boldsymbol{\beta}^\ast)/\sigma^2$, where $\hat{\boldsymbol{\Omega}}$ is the sample covariance of $\mathbf{X}$. We report the mean values of false positives and false negatives along with the median of the model error, because the distribution of model error  values is heavily skewed. We also report the standard errors for the above measures in parentheses.   

%
%


Table~\ref{Ex1} summarizes the results for M1, and we can observe that cross-validation, generalized cross-validation, AIC, and BIC on all the penalties  tend to select too many variables in the model, as evidenced by the extremely large false positive numbers. The performance of the extended BIC here is comparable to that of SPSP, since SPSP produces a smaller number false negatives and slightly larger number of false positives. Stability selection performs well for adaptive lasso, but yields large false negative values for the other three penalties, especially the two non-convex ones: it misses two of the three signals on average for MCP and SCAD. The SPSP procedure provides the smallest model error for all penalties except for adaptive lasso, where the model error for SPSP is slightly larger than that for stability selection.

%


The simulation results for M2 are summarized in  Table~\ref{Ex2}. We find that our SPSP approach has a more significant advantage over all of the other approaches when $p >>n$, compared to the results for M1. The SPSP procedure produces much smaller false positive values than cross-validation, generalized cross-validation, AIC, and BIC for lasso and adaptive lasso,  while its false negative values are close to those for the other approaches. On the other hand, the extended BIC and stability selection often identify 1 or no important signals, although their false positive values are slightly smaller than the false positive value for SPSP. Here cross-validation has a similar performance to SPSP for the non-convex penalties, but it performs rather poorly for lasso and adaptive lasso. Finally, the performance of SPSP with ridge regression is very close  to that of the other penalties, which proves that even the $l_2$ penalty can be effective in feature selection with the SPSP procedure.   
%
%

Table~\ref{Ex3} presents the simulation results for M3, where the true model is very sparse and all of the important variables are correlated.  We observe that compared to cross-validation, generalized cross-validation, AIC, and BIC, the SPSP procedure can dramatically improve the selection accuracy by selecting fewer irrelevant variables. We also observe that stability selection identifies fewer than one of the six signals with the non-convex penalties. Moreover, SPSP produces the smallest model error among all the approaches for all of the penalties, although in this case, the extended BIC also shows competitive performance.   

%

The results for M4 are shown in Table~\ref{Ex4}. We observe that when models are misspecified, the SPSP procedure enjoys a substantially better performance for all four penalties in both selection accuracy and model error compared to the other selection criteria, except that the performance of cross-validation is similar to SPSP for the two non-convex penalties.

{
	\color{red}
	For each model in the simulation studies, we further select one example to illustrate the advantage of the SPSP approach on lasso solution paths. We plot these solution paths and the partitions using SPSP in Figure \ref{af4}.  
}

In summary, the proposed SPSP approach provides the best or close-to-best performance for all of the penalties in all simulation examples. Generally speaking, cross-validation, generalized cross-validation, AIC, and BIC tend to select too many spurious variables, while the extended BIC and stability selection tend to ignore important features and in some cases even fail to identify any useful signals. Moreover, both stability selection and cross-validation cost significantly more computation time than SPSP. In our simulation studies, the computational cost of stability selection is 30--60 times that for SPSP.  Finally, the performance of SPSP with ridge regression is quite similar to the other penalties, also generally outperforming the other selection criteria. This establishes the promise of potential applications of selection using the $l_2$ penalty, a unique feature of SPSP. 
\begin{table}[h]
	\centering
		\centering
		\caption{Simulation results for Model 1 over 100 replicates.}
			
			\begin{tabular}{ccccccccc}
				\hline
				& & CV & GCV & AIC & BIC & EBIC & STAB & SPSP \\ 
				\hline
				Lasso & FP & 12.508 & 21.154 & 43.082 & 41.312 & 0.23 & 0.024 & 4.476 \\ 
				&  & (0.444) & (0.6) & (0.12) & (0.332) & (0.024) & (0.007) & (0.393) \\ 
				& FN & 0.042 & 0.064 & 0.088 & 0.092 & 0.848 & 1.166 & 0.37 \\ 
				&  & (0.009) & (0.012) & (0.013) & (0.014) & (0.047) & (0.031) & (0.027) \\ 
				& ME & 0.271 & 0.442 & 0.806 & 0.798 & 0.456 & 0.445 & 0.253 \\ 
				&  & (0.01) & (0.016) & (0.01) & (0.012) & (0.034) & (0.015) & (0.018) \\ 
				\hline
				adaLasso & FP & 9.084 & 9.876 & 18.402 & 3.216 & 0.212 & 0.504 & 1.558 \\ 
				&  & (0.302) & (0.294) & (0.32) & (0.209) & (0.022) & (0.032) & (0.161) \\ 
				& FN & 0.036 & 0.042 & 0.032 & 0.104 & 0.628 & 0.382 & 0.446 \\ 
				&  & (0.009) & (0.009) & (0.008) & (0.015) & (0.035) & (0.024) & (0.03) \\ 
				& ME & 0.22 & 0.269 & 0.414 & 0.17 & 0.225 & 0.159 & 0.209 \\ 
				&  & (0.01) & (0.01) & (0.01) & (0.009) & (0.018) & (0.011) & (0.014) \\ 
				\hline
				SCAD & FP & 2.388 & 18.496 & 19.476 & 17.736 & 1.414 & 0 & 2.22 \\ 
				&  & (0.103) & (0.132) & (0.125) & (0.187) & (0.195) & (0) & (0.186) \\ 
				& FN & 0.442 & 0.408 & 0.4 & 0.42 & 0.838 & 1.722 & 0.492 \\ 
				&  & (0.024) & (0.024) & (0.024) & (0.025) & (0.037) & (0.026) & (0.03) \\ 
				& ME & 0.28 & 0.8 & 0.816 & 0.787 & 0.35 & 0.66 & 0.251 \\ 
				&  & (0.011) & (0.013) & (0.013) & (0.014) & (0.028) & (0.026) & (0.015) \\ 	\hline
				MCP & FP & 0.928 & 17.902 & 18.192 & 17.692 & 2.972 & 0 & 2.38 \\ 
				
				&  & (0.06) & (0.124) & (0.121) & (0.134) & (0.286) & (0) & (0.167) \\ 
				& FN & 0.678 & 0.428 & 0.426 & 0.428 & 0.82 & 2.302 & 0.574 \\ 
				&  & (0.026) & (0.024) & (0.024) & (0.024) & (0.03) & (0.028) & (0.029) \\ 
				& ME & 0.268 & 0.873 & 0.873 & 0.873 & 0.265 & 0.773 & 0.266 \\ 
				&  & (0.01) & (0.012) & (0.012) & (0.012) & (0.019) & (0.049) & (0.015) \\ 
				\hline
				Ridge & FP &  &  & &  &  & &	3.282 \\ 
				& &  &  & &  &  & & (0.621) \\ 
				& FN &  &  & &  &  & &0.932 \\ 
				&  & &  &  & &  &  & (0.038) \\ 
				& ME &  &  & &  &  & &	0.472 \\ 
				&  &  &  & &  &  & & 	(0.02) \\ 
				\hline
			\end{tabular}
			
			\label{Ex1}
\end{table}

\begin{table}
		\centering
		\caption{Simulation results for Model 2 over 100 replicates.}
			\begin{tabular}{ccccccccc}
				\hline
				& & CV & GCV & AIC & BIC & EBIC & STAB & SPSP \\ 
				\hline
				Lasso & FP & 23.572 & 36.526 & 48.302 & 48.178 & 0.02 & 0.002 & 2.126 \\ 
				&  & (0.632) & (0.601) & (0.122) & (0.12) & (0.006) & (0.002) & (0.212) \\ 
				& FN & 0.142 & 0.13 & 0.122 & 0.122 & 2.25 & 2.3 & 0.712 \\ 
				&  & (0.017) & (0.016) & (0.015) & (0.015) & (0.043) & (0.025) & (0.033) \\ 
				& ME & 0.51 & 0.765 & 0.895 & 0.895 & 1.877 & 0.776 & 0.428 \\ 
				&  & (0.014) & (0.016) & (0.011) & (0.011) & (0.044) & (0.044) & (0.017) \\  	\hline
				adaLasso & FP & 31.228 & 28.382 & 38.772 & 36.354 & 0.032 & 0.208 & 3.138 \\ 
				&  & (0.431) & (0.388) & (0.159) & (0.372) & (0.008) & (0.021) & (0.305) \\ 
				& FN & 0.126 & 0.138 & 0.13 & 0.132 & 1.602 & 0.956 & 0.672 \\ 
				&  & (0.016) & (0.016) & (0.016) & (0.016) & (0.045) & (0.033) & (0.033) \\ 
				& ME & 0.671 & 0.666 & 0.789 & 0.781 & 0.75 & 0.383 & 0.449 \\  
				&  & (0.014) & (0.014) & (0.012) & (0.014) & (0.04) & (0.016) & (0.017) \\ 	\hline
				SCAD & FP & 6.086 & 22.248 & 23.842 & 22.302 & 0.084 & 0 & 6.53 \\ 
				&  & (0.213) & (0.135) & (0.154) & (0.143) & (0.013) & (0) & (0.399) \\ 
				& FN & 0.602 & 1.006 & 1.01 & 1.006 & 2.112 & 2.222 & 0.564 \\ 
				&  & (0.027) & (0.029) & (0.029) & (0.029) & (0.045) & (0.024) & (0.031) \\ 
				& ME & 0.389 & 0.887 & 0.895 & 0.887 & 1.469 & 0.751 & 0.559 \\ 
				&  & (0.013) & (0.012) & (0.012) & (0.012) & (0.049) & (0.041) & (0.021) \\ 	\hline
				MCP & FP & 1.384 & 15.378 & 15.574 & 15.266 & 3.786 & 0 & 3.408 \\ 
				&  & (0.079) & (0.072) & (0.074) & (0.075) & (0.269) & (0) & (0.214) \\ 
				& FN & 0.986 & 0.928 & 0.928 & 0.928 & 1.338 & 2.67 & 0.922 \\ 
				&  & (0.029) & (0.028) & (0.028) & (0.028) & (0.033) & (0.021) & (0.032) \\ 
				& ME & 0.41 & 0.932 & 0.932 & 0.932 & 0.705 & 2.143 & 0.575 \\ 
				&  & (0.013) & (0.012) & (0.012) & (0.012) & (0.027) & (0.046) & (0.019) \\ 	\hline
				Ridge & FP &  &  & &  &  & &	3.134 \\ 
				& &  &  & &  &  & & 	(1.79) \\ 
				& FN &  &  & &  &  & &	0.944 \\ 
				&  & &  &  & &  &  & 	(0.031) \\ 
				& ME &  &  & &  &  & &	0.472 \\ 
				&  &  &  & &  &  & & 	(0.013) \\ 
				\hline
			\end{tabular}
			\label{Ex2}
\end{table}
\begin{table}[h]
		\centering
		\caption{Simulation results for Model 3 over 100 replicates.}
			\begin{tabular}{ccccccccc}
				\hline
				&  & CV & GCV & AIC & BIC & EBIC & STAB & SPSP \\ 
				\hline
				Lasso & FP & 12.846 & 22.674 & 42.84 & 42.08 & 0.354 & 0.02 & 3.222 \\ 
				&  & (0.459) & (0.591) & (0.117) & (0.207) & (0.111) & (0.006) & (0.311) \\ 
				& FN & 2.27 & 2.268 & 2.208 & 2.224 & 3.124 & 4.754 & 2.6 \\ 
				&  & (0.026) & (0.028) & (0.032) & (0.031) & (0.056) & (0.035) & (0.038) \\ 
				& ME & 0.395 & 0.543 & 0.827 & 0.823 & 0.623 & 1.844 & 0.306 \\ 
				&  & (0.01) & (0.013) & (0.011) & (0.011) & (0.049) & (0.068) & (0.016) \\ 	\hline
				adaLasso & FP & 10.144 & 11.572 & 21.172 & 4.026 & 0.146 & 0.554 & 2.06 \\ 
				&  & (0.32) & (0.325) & (0.268) & (0.285) & (0.019) & (0.033) & (0.183) \\ 
				& FN & 1.83 & 1.794 & 1.218 & 2.274 & 2.92 & 3.536 & 2.424 \\ 
				&  & (0.036) & (0.037) & (0.039) & (0.034) & (0.038) & (0.032) & (0.043) \\ 
				& ME & 0.339 & 0.37 & 0.507 & 0.29 & 0.365 & 0.337 & 0.292 \\ 
				&  & (0.009) & (0.01) & (0.01) & (0.009) & (0.021) & (0.026) & (0.013) \\ 	\hline
				SCAD & FP & 1.308 & 19.776 & 20.894 & 18.28 & 0.594 & 0 & 0.696 \\ 
				&  & (0.092) & (0.147) & (0.139) & (0.265) & (0.11) & (0) & (0.126) \\ 
				& FN & 3.86 & 3.96 & 3.958 & 3.962 & 4.01 & 5.378 & 3.002 \\ 
				&  & (0.017) & (0.009) & (0.009) & (0.009) & (0.012) & (0.03) & (0.035) \\ 
				& ME & 0.39 & 0.813 & 0.827 & 0.804 & 0.37 & 2.207 & 0.267 \\ 
				&  & (0.009) & (0.012) & (0.012) & (0.014) & (0.019) & (0.067) & (0.009) \\ 	\hline
				MCP & FP & 0.574 & 18.718 & 19.002 & 18.506 & 2.438 & 0 & 0.984 \\ 
				&  & (0.058) & (0.127) & (0.125) & (0.135) & (0.29) & (0) & (0.131) \\ 
				& FN & 4 & 3.984 & 3.984 & 3.984 & 3.998 & 5.808 & 3.986 \\ 
				&  & (0) & (0.006) & (0.006) & (0.006) & (0.002) & (0.02) & (0.005) \\ 
				& ME & 0.375 & 0.906 & 0.906 & 0.906 & 0.363 & 3.572 & 0.353 \\ 
				&  & (0.007) & (0.012) & (0.012) & (0.012) & (0.014) & (0.052) & (0.01) \\ 
				\hline
				Ridge & FP &  &  & &  &  & &		12.79 \\ 
				& &  &  & &  &  & & 	(1.248) \\ 
				& FN &  &  & &  &  & &		1.36 \\ 
				&  & &  &  & &  &  & 		(0.073) \\ 
				& ME &  &  & &  &  & &		0.34 \\ 
				&  &  &  & &  &  & & 		(0.039) \\ 
				\hline
			\end{tabular}
			\label{Ex3}
\end{table}

\begin{table}
		\centering
		\caption{Simulation results for Model 4 over 100 replicates.}
			\begin{tabular}{ccccccccc}
				\hline
				&  & CV & GCV & AIC & BIC & EBIC & STAB & SPSP \\ 
				\hline
				Lasso & FP & 17.438 & 22.624 & 42.604 & 40.322 & 2.2 & 0.002 & 4.312 \\ 
				&  & (0.416) & (0.512) & (0.118) & (0.352) & (0.396) & (0.002) & (0.365) \\ 
				& FN & 0.142 & 0.124 & 0.098 & 0.13 & 3.746 & 3.112 & 1 \\ 
				&  & (0.018) & (0.017) & (0.014) & (0.02) & (0.084) & (0.044) & (0.052) \\ 
				& ME & 0.926 & 1.145 & 1.742 & 1.725 & 5.377 & 2.797 & 1.146 \\ 
				&  & (0.024) & (0.028) & (0.022) & (0.026) & (0.125) & (0.067) & (0.048) \\  		\hline
				adaLasso & FP & 13.15 & 13.94 & 25.526 & 9.976 & 0.362 & 0.496 & 2.91 \\ 
				&  & (0.316) & (0.327) & (0.28) & (0.43) & (0.038) & (0.032) & (0.242) \\ 
				& FN & 0.25 & 0.236 & 0.136 & 0.36 & 2.534 & 1.098 & 1.048 \\ 
				&  & (0.023) & (0.023) & (0.016) & (0.029) & (0.083) & (0.041) & (0.049) \\ 
				& ME & 0.872 & 0.936 & 1.213 & 0.922 & 2.081 & 0.906 & 1.059 \\ 
				&  & (0.023) & (0.023) & (0.022) & (0.027) & (0.113) & (0.038) & (0.045) \\ 		\hline
				SCAD & FP & 3.914 & 19.242 & 20.484 & 19.006 & 5.114 & 0 & 4.618 \\ 
				&  & (0.109) & (0.13) & (0.128) & (0.159) & (0.35) & (0) & (0.272) \\ 
				& FN & 0.314 & 0.324 & 0.324 & 0.326 & 1.962 & 4.094 & 0.574 \\ 
				&  & (0.028) & (0.028) & (0.028) & (0.028) & (0.096) & (0.034) & (0.04) \\ 
				& ME & 0.918 & 1.707 & 1.741 & 1.707 & 1.902 & 4.035 & 0.996 \\ 
				&  & (0.034) & (0.025) & (0.025) & (0.026) & (0.128) & (0.079) & (0.039) \\ 		\hline
				MCP & FP & 1.602 & 18.282 & 18.664 & 18.204 & 11.228 & 0 & 3.96 \\ 
				&  & (0.079) & (0.118) & (0.113) & (0.123) & (0.394) & (0) & (0.223) \\ 
				& FN & 0.562 & 0.33 & 0.326 & 0.33 & 0.646 & 4.63 & 0.548 \\ 
				&  & (0.038) & (0.027) & (0.027) & (0.027) & (0.047) & (0.025) & (0.04) \\ 
				& ME & 0.829 & 1.831 & 1.836 & 1.831 & 1.711 & 5.516 & 1.029 \\ 
				&  & (0.037) & (0.024) & (0.024) & (0.024) & (0.047) & (0.084) & (0.041) \\

				\hline
				Ridge & FP &  &  & &  &  & &			3.528 \\ 
				& &  &  & &  &  & & 	(0.628) \\ 
				& FN &  &  & &  &  & &		2.336 \\ 
				&  & &  &  & &  &  & 		(0.059) \\ 
				& ME &  &  & &  &  & &		2.093 \\ 
				&  &  &  & &  &  & & 		(0.073) \\ 
				\hline
			\end{tabular}
			\label{Ex4}
\end{table}

\begin{figure}
	\begin{center}
		\includegraphics[width=0.49\columnwidth]{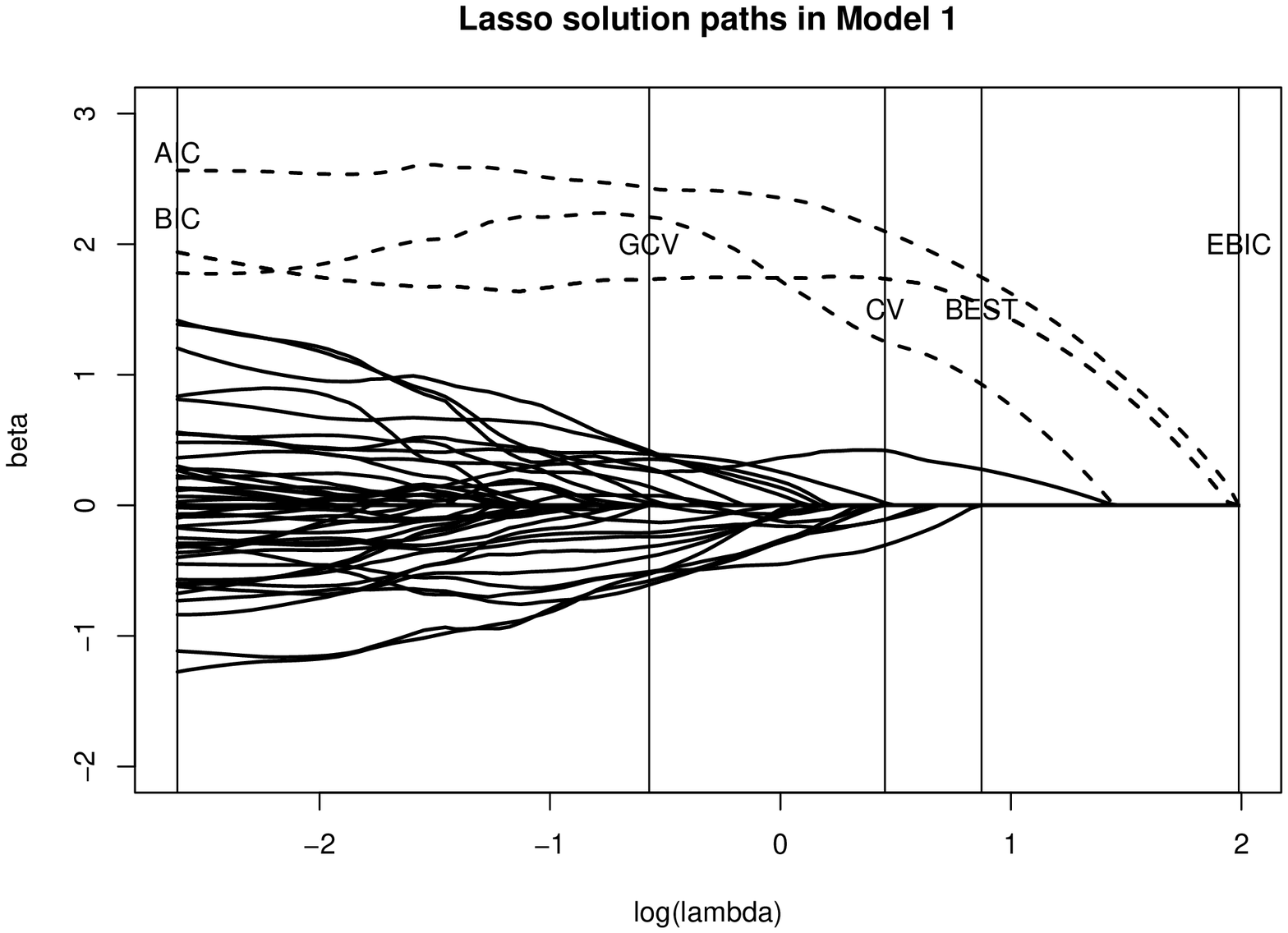}
		\includegraphics[width=0.49\columnwidth]{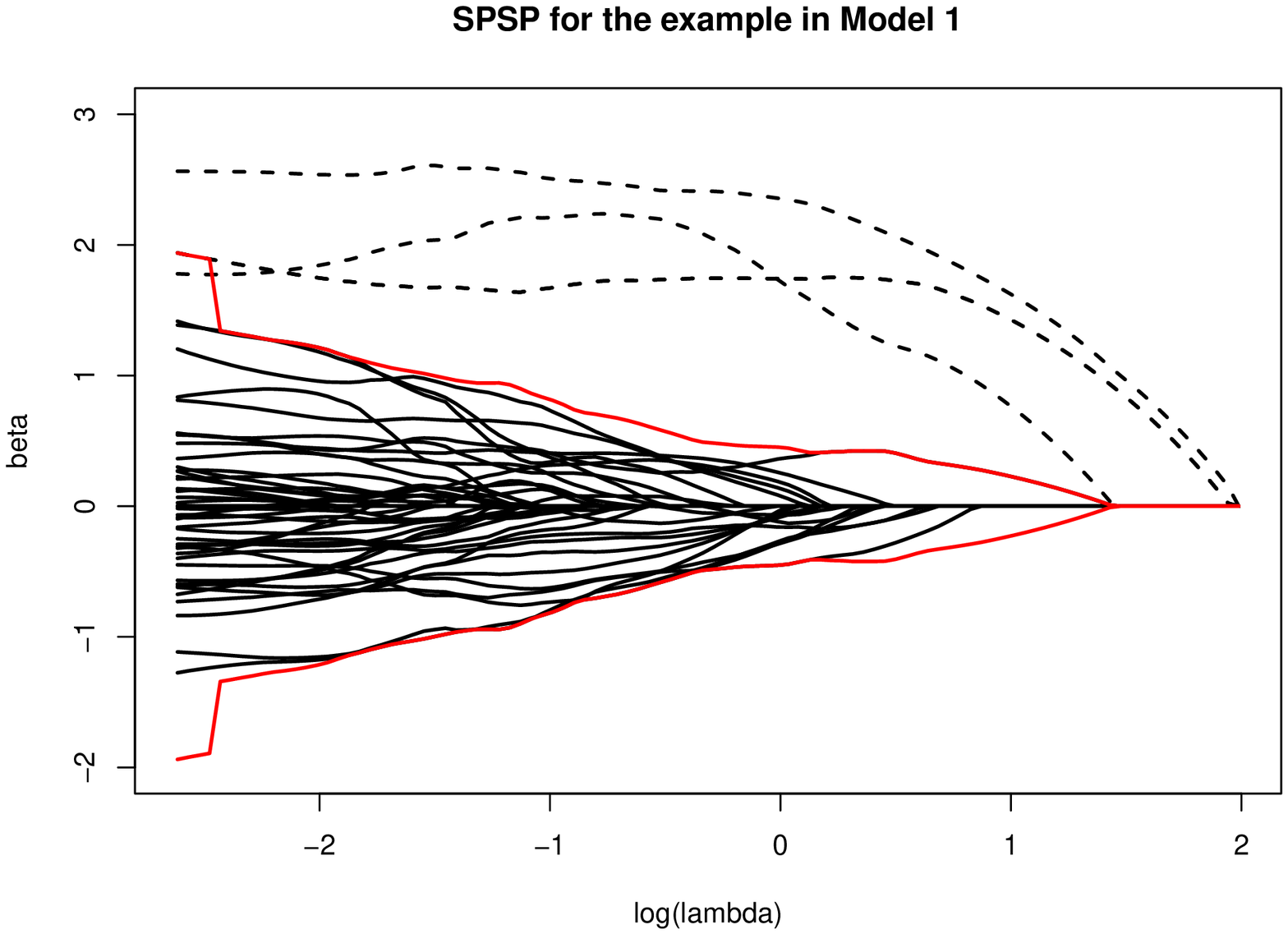}
		
		\includegraphics[width=0.49\columnwidth]{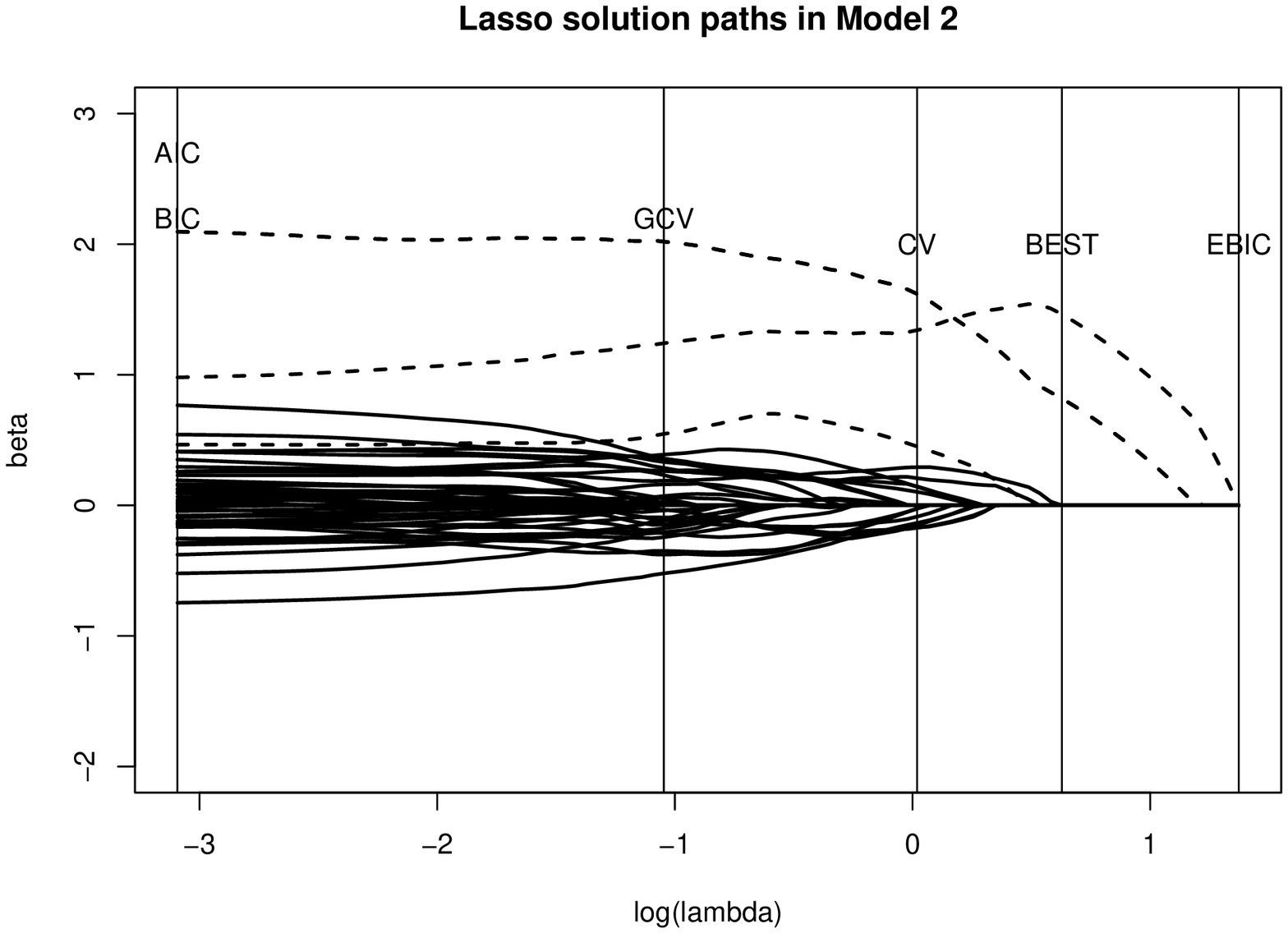}
		\includegraphics[width=0.49\columnwidth]{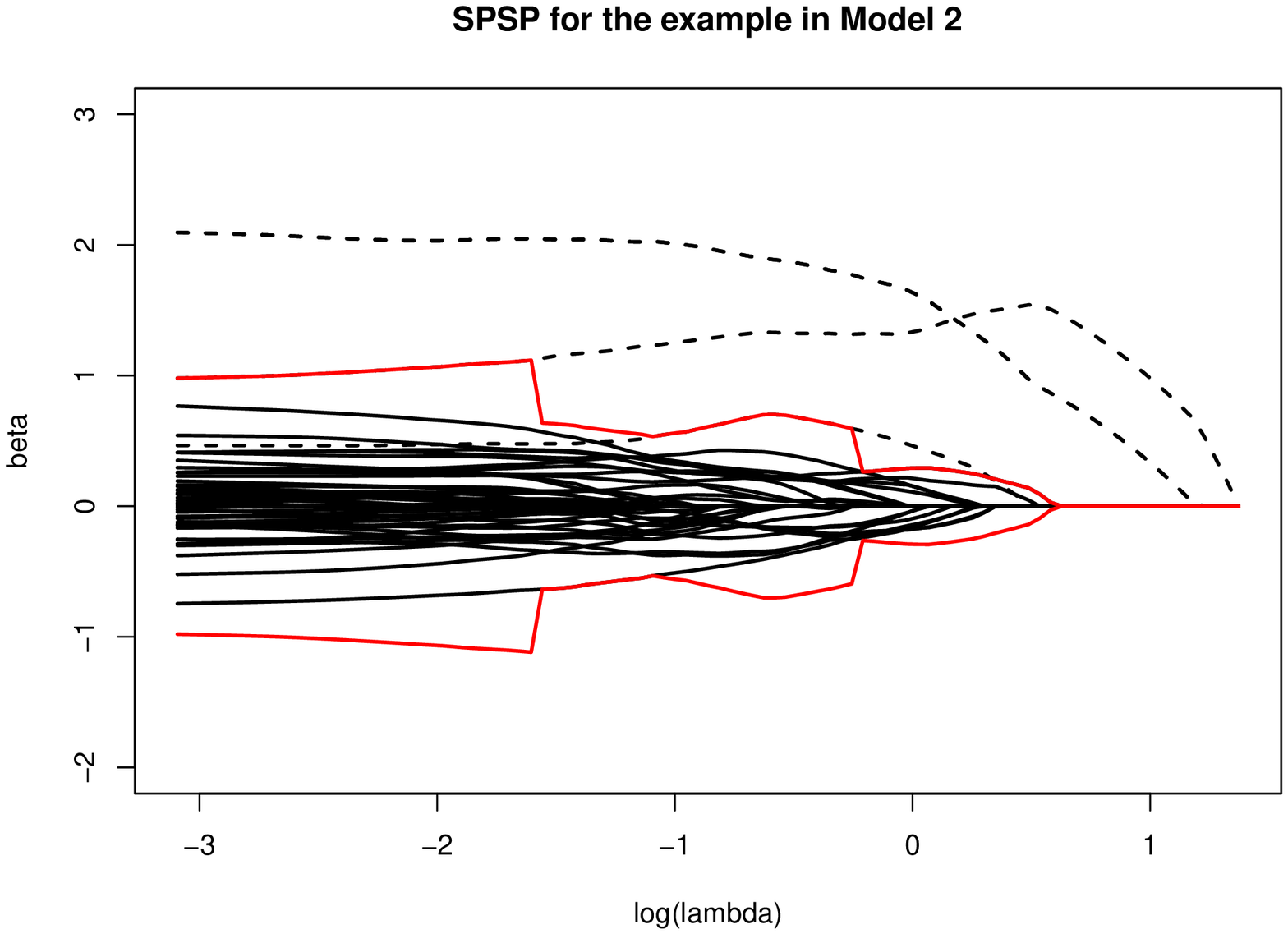}
		
		\includegraphics[width=0.49\columnwidth]{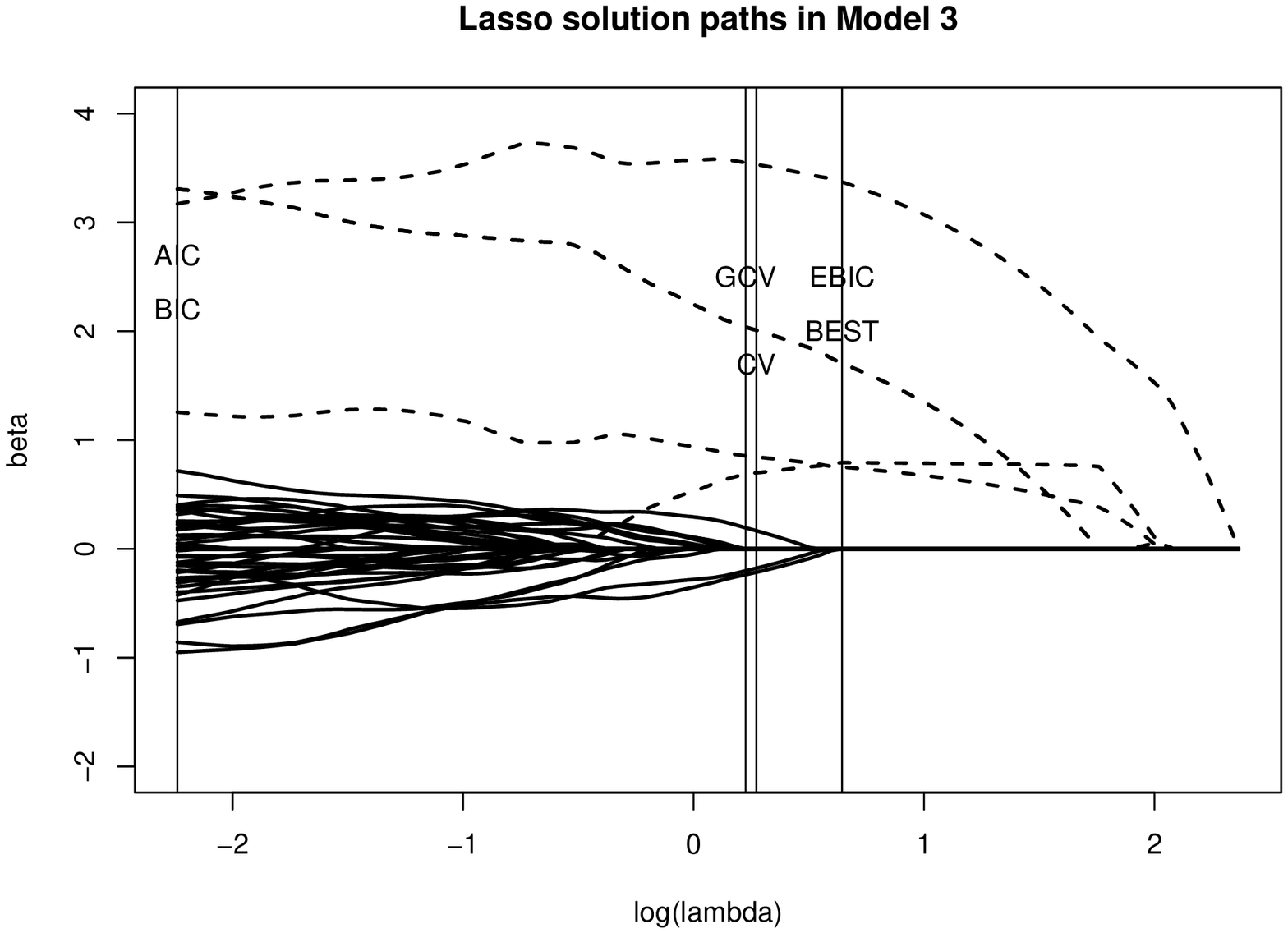}
		\includegraphics[width=0.49\columnwidth]{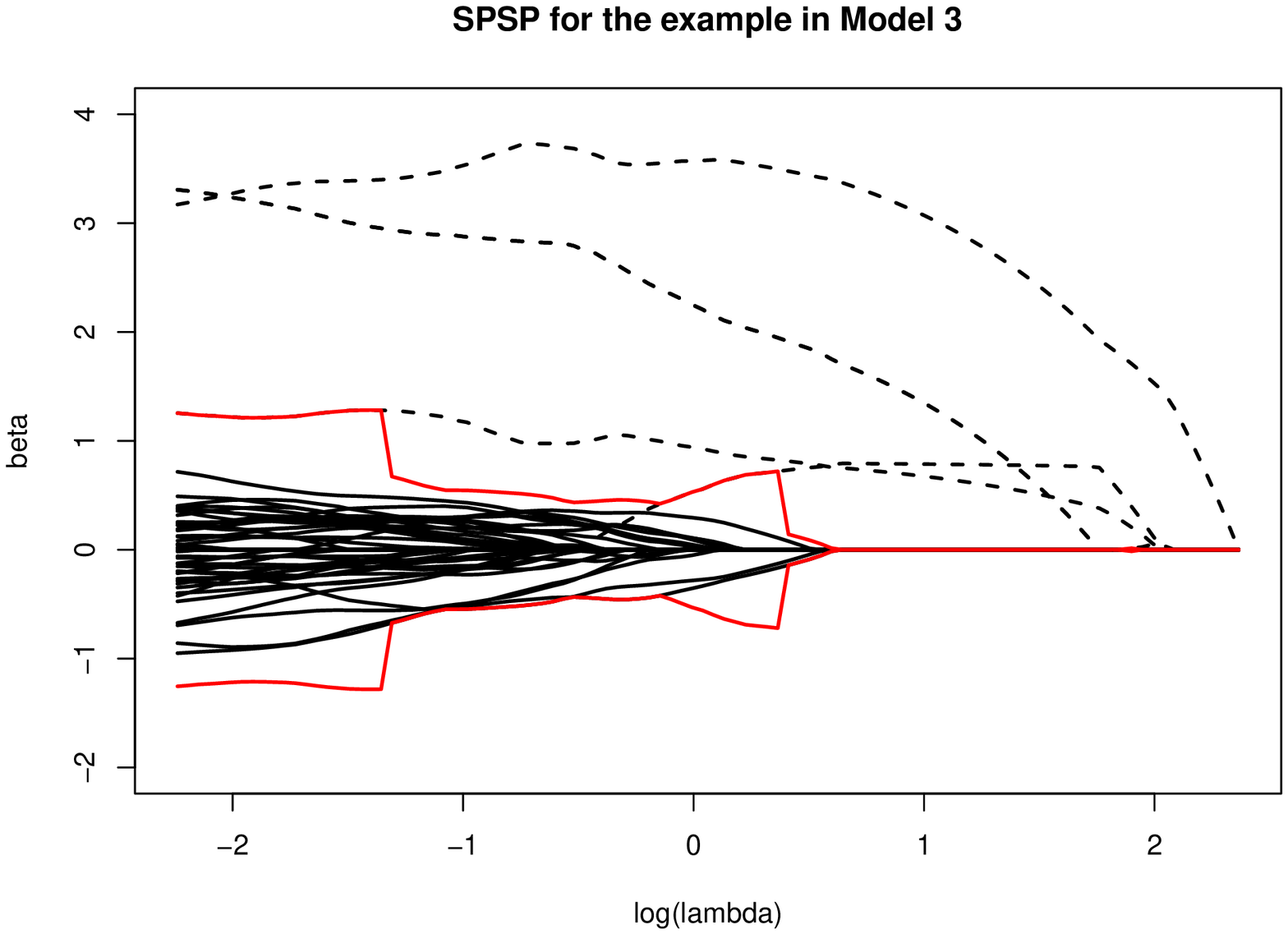}
		
		\includegraphics[width=0.49\columnwidth]{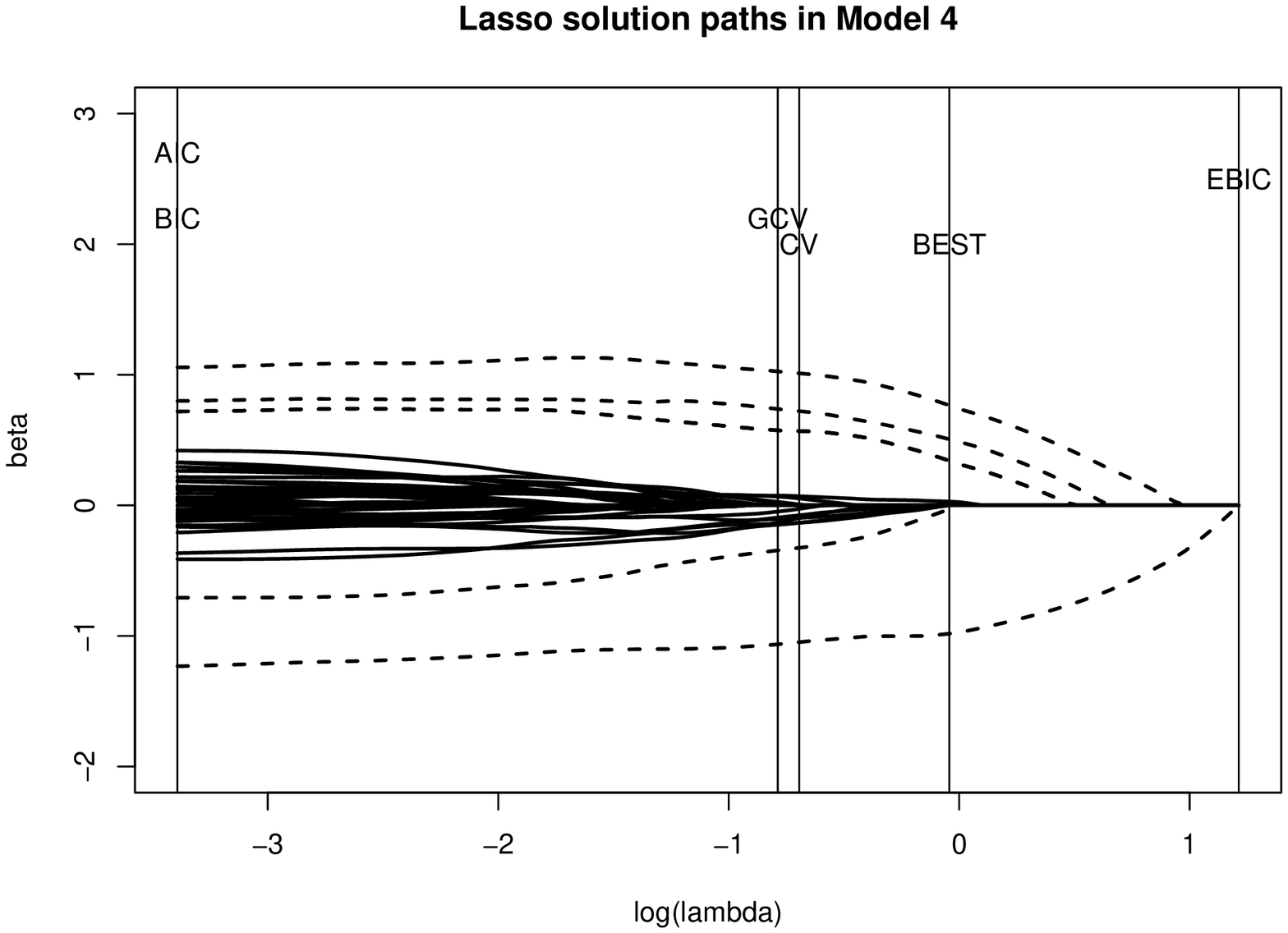}
		\includegraphics[width=0.49\columnwidth]{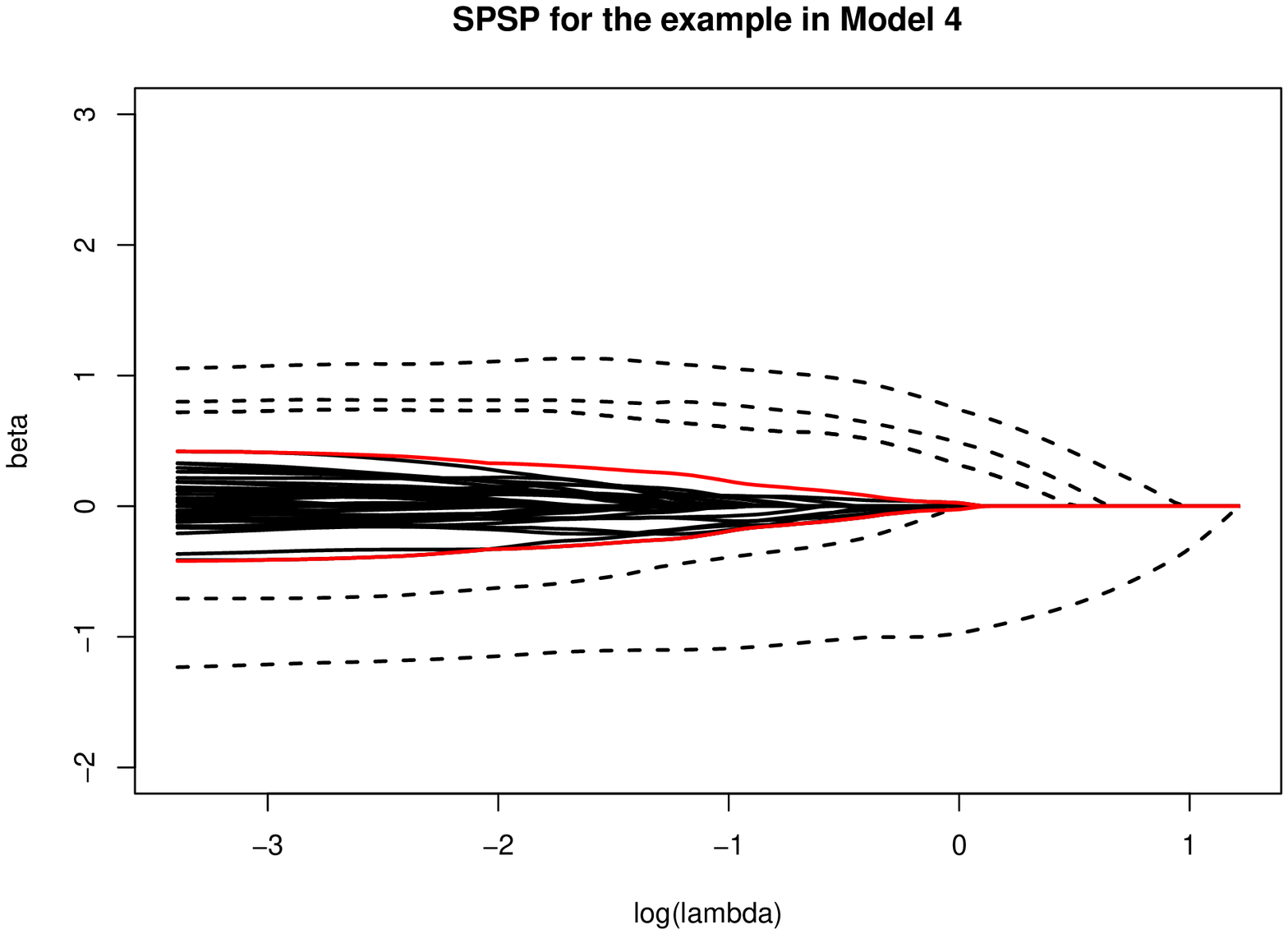}
		
		\caption{Left Figures: The lasso solution paths for one simulated examples of these four models. The dashed lines are the paths of the non-zero coefficients, while the black lines are the paths of the zero coefficients 
			The vertical lines represent the tuning parameters selected by different criteria.
			Right Figures: Partitions of the lasso solution paths of the same simulated examples using SPSP.
		}
		\label{af4}		
		
	\end{center}
\end{figure}

\section{Data Analysis}
In this section, we apply the proposed SPSP procedure along with the other selection criteria to analyze the glioblastoma (GBM) gene expression data from The Cancer Genome Atlas (TCGA) consortium (\textit{https://xenabrowser.net/datapages/}). The goal of the analysis is to identify the highly informative genes regarding the survival time of glioblastoma. 
The two gene expression data sets we obtained were both measured experimentally by the University of North Carolina TCGA genomic characterization center.  In our analysis, we use the logarithm of the survival time as the response variable. 
The sample sizes of the two data sets are 428 and 91, after removing one common sample from the two data sets. We use the larger data set as our training set and the smaller one as the testing set. 
%
%


We first screen the $17814$ genes with the training data using sure independence screening \citep{fan2008sure} to identify the $1000$ genes which are most correlated with the response \citep{wang2011random}. We then apply the proposed SPSP approach and the other methods to the training set and evaluate the prediction errors of the responses using the test data. 


Table \ref{rdt} shows the number of genes selected  in the training set and the corresponding mean prediction error  in the test set for each method.  We notice that although the SPSP for lasso selects only $4$ genes, it yields the smallest prediction error among all the penalty--criterion combinations. On the other hand, cross-validation, generalized cross-validation, AIC, and BIC for lasso  all select too many irrelevant genes and therefore produce larger prediction errors. Both the extended BIC and stability selection select fewer variables and produce larger prediction errors. In fact, the extended BIC fails to recognize any signals for any of the penalties, and stability selection fails to recognize any signals for the non-convex penalties.
%

\begin{table}
	\caption{Analysis of the glioblastoma data set}
	\label{rdt}
	\centerline{
			\begin{tabular}{ccccccccc}
				\hline
				&  & CV  & GCV & AIC & BIC & EBIC & STAB  & SPSP \\ 
				\hline
				Lasso & NUM & 37 & 157 & 275 & 2 & 0 & 3 & 4 \\ 
				& MPE & 1.38 & 1.215 & 1.644 & 1.317 & 1.302 & 1.579 & 1.143 \\ \hline
				adaLasso & NUM & 92 & 100 & 150 & 5 & 0 & 14 & 24 \\ 
				& MPE & 1.266 & 1.216 & 1.175 & 1.422 & 1.302 & 1.529 & 1.363 \\  \hline
				SCAD & NUM & 45 & 127 & 127 & 127 & 0 & 0 & 3 \\ 
				& MPE & 1.299 & 3.005 & 3.005 & 3.005 & 1.302 & 1.302 & 1.28 \\  \hline
				MCP & NUM & 23 & 113 & 113 & 76 & 0 & 0 & 13 \\ 
				& MPE & 1.277 & 2.472 & 2.472 & 1.882 & 1.302 & 1.302 & 1.502 \\ 
				\hline
	\end{tabular}}
\end{table}
%


Two genes identified by the proposed SPSP approach for lasso have been proved to be important by previous studies. MAGEC2, the Melanoma-associated antigen C2, a member of the MAGE gene family, is expressed in a wide variety of tumor types, including breast cancer  \citep{song2016cancer} and glioma \citep{sasaki2001mage}. \cite{song2016cancer} further demonstrated that the gene expression levels of MAGEC2 are highly positively correlated with the phosphorylated STAT3, the signal transducer and activator of transcription 3, in human tumor tissues, and the STAT3 activation has been shown 
to affect multiple endogenous negative regulators such as PIAS3 (protein inhibitor of activated STAT3) in glioblastoma multiforme tumors. Therefore, the results of our present analysis provide additional evidence regarding the impact of MAGEC2 on glioblastoma behavior besides the work of \cite{song2016cancer}, which may motivate scientists to conduct more experiments to explore the direct relationship between MAGEC2 and the human glioma. Moreover, the SPSP procedure additionally identifies TMEM49, also named VMP1, vacuole membrane protein 1, the gene which  encodes an important protein in the process of autophagy. \cite{ying2011hypoxia} identified the gene as the direct and functional target of MicroRNA-210 (miR-210) and investigated the associations between the expression of the gene and hepatocellular carcinoma (HCC). \cite{lai2015serum} demonstrated that miR-210 is a potential serum biomarker in the diagnosis and prognosis of glioma, and it is also worth conducting further experiments to establish the connection between the expression of this gene and glioblastoma. Finally, our SPSP approach is the only method which can identify both MAGEC2 and THEM49 without selecting more than 100 genes. 

\section{Discussion}
We have proposed a novel selection procedure for the penalized likelihood approach based on the entire solution paths. By utilizing estimators over all values of the tuning parameter, we can obtain better selection accuracy than the commonly used approach of selecting only one tuning parameter based on certain criteria. Moreover, the proposed SPSP procedure also achieves selection consistency under conditions that are substantially weaker than the irrepresentable condition, which is almost necessary under the framework currently in usage. Another advantage of SPSP is that we can now carry out selection with a strictly convex $l_2$ penalty. Although the present paper mainly focuses on feature selection for linear models, the SPSP procedure can easily be applied to most selection problems with one or more tuning parameters. In Section 3 of the appendix, we include a simulation study on estimating high-dimensional Gaussian graphical models with the SPSP approach to illustrate SPSP's potential for other selection problems.

With this study, we hope to initiate a discussion of how to better apply information contained in entire solution paths. For example, we can rank the importance of features by exploring the differences between the behaviors of the solution paths for important features and those for spurious ones. It is also possible to develop an inference procedure and quantify the uncertainty of selection results based on the entire solution paths. In addition, it might be interesting to see whether the solution paths for important features differ from those for irrelevant ones in any other manner besides the magnitude of the estimators.

\section{Code}
All source code for the simulations and real data analysis can be found on GitHub: \texttt{https://github.com/yliu433/r-spsp}.

\section{Appendix}\label{app}

\subsection{Technical proofs}\label{A1}
\begin{proof}[Proof of Theorem 1]
	By Lemma \ref{cc}, with probability at least $1-2e^{-t^2/2}$, we have
	\bee
	\hat	D(S^c_\lambda)
	\leq
	\max\{D_{i'}: i' < i_\lambda\} + \delta_\lambda
	%
	\eee
	and similarly for any $j_1\in S_\lambda,j_2\in S_\lambda^c$,
	\bee
	\begin{split}
		\left||\hat{\beta}_{j_1}|-|\hat{\beta}_{j_2}|\right|&\geq |\beta_{j_1}^\ast|-\left||\hat{\beta}_{j_1}|-|\beta_{j_1}^\ast|\right|-|\hat{\beta}_{j_2}|\\
		&\geq D_{i_\lambda}-\delta_\lambda
	\end{split}
	\eee
	
	Then with probability at least $1-2e^{-t^2/2}$,
	\bee
	\begin{split}
		\frac{\hat D(S_\lambda,S_\lambda^c)}{\hat D(S^c_\lambda)}\geq \frac{D_{i_\lambda}-2\delta_\lambda}{\max\{D_{i'}: i' < i_\lambda\} + \delta_\lambda}& > R.
	\end{split}
	\eee
	The above inequality follows immediately from $C_{under}^{i_\lambda}>R$ and $ D_{i_\lambda}>(1-\frac{R}{C_{under}^{i_\lambda}})^{-1}(1+R)\delta_\lambda.$

	To verify the partitioning rule, we have
	\bee
	\begin{split}
		\hat D(S_\lambda)\leq D_{\max}+\delta_\lambda
		%
		%
	\end{split}
	\eee
	
	Then with probability at least $1-2e^{-t^2/2}$,
	\bee
	\begin{split}
		\frac{\hat D(S_\lambda)}{\hat D(S_\lambda,S_\lambda^c)}&\leq\frac{D_{\max}+\delta_\lambda}{D_{i_\lambda}-\delta_\lambda},\\
		&\leq C+ \frac{(C+1)\delta_\lambda}{D_{i_\lambda}-2\delta_\lambda} \\
		&\leq C+ \frac{1+C}{R}=R.
	\end{split}
	\eee
	The last inequality follows from the fact that $R=1+C.$
\end{proof}

\begin{proof}[Proof of Theorem 2]
	A sufficient condition for $\hat S_\lambda \cap S^C=\emptyset$ is
	$$
	D_{\max} > R \delta_\lambda.
	$$
	Then the theorem follows immediately from Theorem \ref{compact} and the fact that  $\hat S=\cup_\lambda \hat S_\lambda.$
\end{proof}

\begin{proof}[Proof of Lemma 2] Note that 
	\begin{eqnarray*}
		&&	\|\bX\bbeta^\ast-\bX_{S}\hat\bbeta_{S}-\bX_{S^C}\hat\bbeta_{S^C}\|^2 \\
		\geq &&	\|\bX\bbeta^\ast-\bX_{S}\hat\bbeta_{S}-\bX_S\mbox{diag}(\sign(\bbeta^\ast_S)) \sign(\bbeta_S^\ast)(\bX_S^T\bX_S)^{-1}\bX^T_S\bX_{S^C}\hat\bbeta_{S^C}\|^2.
	\end{eqnarray*}
	By the irrepresentable condition, there is a $\eta>0$ such that
	\[
	\|\sign(\bbeta_S^\ast)(\bX_S^T\bX_S)^{-1}\bX^T_S\bX_{S^C}\|_{\infty} \leq 1-\eta,
	\]	
	from which it follows immediately that
	$$\|\mbox{diag}(\sign(\bbeta^\ast_S)) \sign(\bbeta_S^\ast)(\bX_S^T\bX_S)^{-1}\bX^T_S\bX_{S^C}\hat\bbeta_{S^C}\|_1 \leq \|\hat\bbeta_{S^C}\|_1(1-\eta).$$ Therefore, the identifiability condition holds if the irrepresentable condition holds.

\end{proof}

To prove Theorem 3, we introduce the following two lemmas.
\begin{lemma}
	\label{weak}
	Under $WIC(k, \kappa)$, the following inequality holds for the lasso solution $\hat{\bbeta}= (\hat\bbeta_{S}, \hat{\bbeta}_{S^C})$ with $\lambda > \lambda_0(\frac{2+2k-k\eta}{k\eta}+\kappa)$ with probability at least $1-2e^{-t^2}$,
	\[
	\|\hat{\bbeta}_{S^C}\|_1 \leq k \|\hat{\bbeta}_S\|_1.
	\]
\end{lemma}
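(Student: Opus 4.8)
The plan is to argue by contradiction, exploiting the optimality of the lasso solution together with the competitor supplied by the weak identifiability condition. Write the lasso objective as $L(\bbeta)=\frac{1}{n}\|\by-\bX\bbeta\|_2^2+\lambda\|\bbeta\|_1$, so that $\hat{\bbeta}$ is its global minimizer. Suppose, contrary to the claim, that $\|\hat{\bbeta}_{S^c}\|_1> k\|\hat{\bbeta}_S\|_1$. I would then exhibit a vector $\tilde{\bbeta}$ with $L(\tilde{\bbeta})<L(\hat{\bbeta})$, which is impossible. Throughout I work on the event $\mathcal{A}=\{\frac{2}{n}\|\bX^T\bep\|_\infty\le\lambda_0\}$, the same Gaussian concentration event underlying Lemma \ref{cc}, which carries the probability asserted in the statement.

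The competitor comes directly from $WIC(k,\kappa)$ applied with $\bar{\bbeta}=\hat{\bbeta}$. Let $\tilde{\bbeta}$ be a minimizer of $\|\bX\bbeta^\ast-\bX\bbeta\|_2^2$ over the set $\Theta(\|\hat{\bbeta}_S\|_1,\|\hat{\bbeta}_{S^c}\|_1)$; a minimizer exists since this set is closed and bounded. Membership of $\tilde{\bbeta}$ in $\Theta$ gives two facts. First, the $l_1$ budget yields $\|\tilde{\bbeta}\|_1\le\|\hat{\bbeta}_S\|_1+(1-\eta)\|\hat{\bbeta}_{S^c}\|_1$, so the penalty strictly drops: $\|\hat{\bbeta}\|_1-\|\tilde{\bbeta}\|_1\ge\eta\|\hat{\bbeta}_{S^c}\|_1$. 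Second, by the inequality \eqref{weak_iden} the noiseless fit of $\tilde{\bbeta}$ is nearly as good as that of $\hat{\bbeta}$, namely $\|\bX\bbeta^\ast-\bX\hat{\bbeta}\|_2^2\ge\|\bX\bbeta^\ast-\bX\tilde{\bbeta}\|_2^2-\kappa\eta\|\hat{\bbeta}_{S^c}\|_1$.

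Next I would compare $L(\hat{\bbeta})$ and $L(\tilde{\bbeta})$. Expanding each residual around $\bX\bbeta^\ast$ via $\by=\bX\bbeta^\ast+\bep$, the difference $L(\hat{\bbeta})-L(\tilde{\bbeta})$ splits into a noiseless-fit difference, a noise cross term $\frac{2}{n}\bep^T\bX(\tilde{\bbeta}-\hat{\bbeta})$, and the penalty difference. The fit difference is bounded below using the second fact above; on $\mathcal{A}$ the cross term is at least $-\lambda_0\|\tilde{\bbeta}-\hat{\bbeta}\|_1$; and the penalty difference is at least $\lambda\eta\|\hat{\bbeta}_{S^c}\|_1$. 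The one remaining quantity is $\|\tilde{\bbeta}-\hat{\bbeta}\|_1$, which the triangle inequality bounds by $2\|\hat{\bbeta}_S\|_1+(2-\eta)\|\hat{\bbeta}_{S^c}\|_1$; then invoking the contradiction hypothesis in the form $\|\hat{\bbeta}_S\|_1<\frac{1}{k}\|\hat{\bbeta}_{S^c}\|_1$ gives $\|\tilde{\bbeta}-\hat{\bbeta}\|_1<\frac{2+2k-k\eta}{k}\|\hat{\bbeta}_{S^c}\|_1$. Collecting terms, $L(\hat{\bbeta})-L(\tilde{\bbeta})$ is a strictly positive multiple of $\|\hat{\bbeta}_{S^c}\|_1$ as soon as $\lambda\eta>\lambda_0\frac{2+2k-k\eta}{k}+\lambda_0\kappa\eta$, which rearranges (dividing by $\eta$) to the stated threshold $\lambda>\lambda_0\left(\frac{2+2k-k\eta}{k\eta}+\kappa\right)$. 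This contradicts the optimality of $\hat{\bbeta}$, so $\|\hat{\bbeta}_{S^c}\|_1\le k\|\hat{\bbeta}_S\|_1$.

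The main obstacle is bookkeeping of scale in the objective comparison, and this is where I would be most careful. The argument only closes because $WIC$ lets me trade an $(1-\eta)$-fraction of the offending $S^c$ mass for a comparably good fit, so the whole estimate hinges on re-expressing the perturbation size $\|\tilde{\bbeta}-\hat{\bbeta}\|_1$ in units of $\|\hat{\bbeta}_{S^c}\|_1$, which is legitimate only under the contradiction hypothesis. I expect the delicate point to be reconciling the unnormalized $\|\cdot\|_2^2$ scale of the identifiability condition with the $1/n$-normalized objective, since it is exactly that reconciliation that must turn the fit slack into the term $\lambda_0\kappa\eta\|\hat{\bbeta}_{S^c}\|_1$ and make the constant multiplying $\lambda_0$ match $\frac{2+2k-k\eta}{k\eta}+\kappa$ precisely.
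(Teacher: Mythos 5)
Your proof is correct and follows the same core strategy as the paper's: a contradiction with the optimality of the lasso solution, in which the penalty gain $\lambda\eta\|\hat{\bbeta}_{S^c}\|_1$ coming from the $l_1$ budget is played against the noise cross term, bounded on the concentration event by $\lambda_0\|\tilde{\bbeta}-\hat{\bbeta}\|_1\leq\lambda_0\bigl(\tfrac{2}{k}+2-\eta\bigr)\|\hat{\bbeta}_{S^c}\|_1$ under the contradiction hypothesis --- exactly the paper's algebra and constants. The genuine difference is the choice of competitor. The paper takes $\tilde{\bbeta}_S$ to be the minimizer of $\|\bX\bbeta^\ast-\bX_S\bbeta_S\|^2$ over vectors supported on $S$ and then invokes the identifiability condition; as a consequence the $\kappa$ slack never enters its computation, and the threshold actually verified there is $\lambda>\lambda_0\frac{2+2k-k\eta}{k\eta}$, without the $+\kappa$ appearing in the lemma statement. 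Strictly speaking, the paper's proof establishes the lemma under Assumption 2 rather than under $WIC(k,\kappa)$, since $WIC(k,\kappa)$ only lower-bounds the fit of $\hat{\bbeta}$ against the minimum over $\Theta$, not against the $S$-supported minimizer. Your version is more faithful to the statement: by taking $\tilde{\bbeta}$ to be the $\Theta$-minimizer, inequality \eqref{weak_iden} applies directly with $\bar{\bbeta}=\hat{\bbeta}$, the slack $\kappa\eta\|\hat{\bbeta}_{S^c}\|_1$ is carried through the objective comparison, and the stated threshold $\lambda>\lambda_0\bigl(\frac{2+2k-k\eta}{k\eta}+\kappa\bigr)$ emerges. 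The one caveat --- which you correctly flag as the delicate point --- is the scale mismatch: \eqref{weak_iden} is written for the unnormalized $\|\cdot\|_2^2$, so after dividing by $n$ the slack is $\frac{\kappa\eta}{n}\|\hat{\bbeta}_{S^c}\|_1$, and absorbing it into the term $\lambda_0\kappa\eta\|\hat{\bbeta}_{S^c}\|_1$ requires $n\lambda_0\geq 1$, which does hold for $\lambda_0=2\sigma\sqrt{(t^2+2\log p)/n}$ but should be stated explicitly. This is a defect in the formulation of the condition rather than in your argument; the paper avoids the issue only by never using $\kappa$ at all.
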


\begin{proof}
	Since $\hat{\bbeta}= (\hat\bbeta_{S}, \hat{\bbeta}_{S^C})$ is the lasso solution, then for $$\tilde{\bbeta}_S=\arg \min_{\|\bbeta_S\|_1 \leq \|\hat\bbeta_S\|_1+ (1-\eta)\|\hat{\bbeta}_{S^c}\|_1}\| \bX\bbeta^\ast-\bX_{S}\bbeta_{S}  \|^2, $$ we have
	\bee
	\begin{split}
		&\frac{1}{n} \| \bX\bbeta^\ast-\bX_S\tilde{\bbeta}_S\| + \lambda \|\tilde{\bbeta}_S\|_1 + \frac{1}{n}2\varepsilon^T\bX(\bbeta^\ast-\tilde{\bbeta}) \\
		\geq &\frac{1}{n}\|\bX \bbeta^\ast-\bX_S\hat{\bbeta}_S-\bX_{S^C}\hat{\bbeta}_{S^C}\|+  \lambda \|\hat{\bbeta}\|_1 + \frac{1}{n}2\varepsilon^T\bX(\bbeta^\ast-\hat{\bbeta}).
	\end{split}
	\eee
	If the following inequality holds when $\|\hat{\bbeta}_{S^C}\|_1 > k \|\hat{\bbeta}_S\|_1$
	\[
	\lambda\|\hat{\bbeta}\|_1-\lambda\|\tilde{\bbeta}_S\|_1+\frac{1}{n}2\varepsilon^T\bX(\tilde\bbeta-\hat{\bbeta}) \geq 0.
	\]
	Then it follows from the identifiability condition that
	\bee
	\begin{split}
		&\frac{1}{n} \| \bX\bbeta^\ast-\bX_S\tilde{\bbeta}_S\| + \lambda \|\tilde{\bbeta}_S\|_1 + 2\varepsilon^T\bX(\bbeta^\ast-\tilde{\bbeta}) \\ 
		\leq &\frac{1}{n}\|\bX \bbeta^\ast-\bX_S\hat{\bbeta}_S-\bX_{S^C}\hat{\bbeta}_{S^C}\|+  \lambda \|\hat{\bbeta}\|_1 + \frac{1}{n}2\varepsilon^T\bX(\bbeta^\ast-\hat{\bbeta}),
	\end{split}
	\eee
	therefore either $\hat{\bbeta}_{S^c}=0$ or $\|\hat{\bbeta}_{S^C}\|_1 \leq k \|\hat{\bbeta}_S\|_1$.
	
	Because we have concluded  $P(\|\frac{1}{n}2\varepsilon^T\bX\|_{\infty}<\lambda_0)>1-2e^{-t^2}$, when $\|\hat{\bbeta}_{S^C}\|_1 > k \|\hat{\bbeta}_S\|_1,$ we have 
	\begin{eqnarray*}
		&&\lambda\|\hat{\bbeta}\|_1-\lambda\|\tilde{\bbeta}_S\|_1+\frac{1}{n}2\varepsilon^T\bX(\tilde\bbeta-\hat{\bbeta}) \\
		& \geq&  \lambda\|\hat{\bbeta}\|_1-\lambda\|\tilde{\bbeta}_S\|_1 -\lambda_0\|\tilde\bbeta-\hat{\bbeta}\|_1\\
		& \geq& \lambda \eta \|\hat{\bbeta}_{S^C}\|_1 -\lambda_0\|\tilde{\bbeta}_S\|_1-\lambda_0\|\hat{\bbeta}\|_1\\
		& \geq& \lambda \eta \|\hat{\bbeta}_{S^C}\|_1 -\lambda_0(\frac{1}{k}+1-\eta) \|\hat{\bbeta}_{S^C}\|_1 - \lambda_0(\frac{1}{k}+1) \|\hat{\bbeta}_{S^C}\|_1\\
		& = & \{\lambda\eta -\lambda_0(\frac{2}{k}+2-\eta)\}\|\hat{\bbeta}_{S^C}\|_1\\
		& >&  0.
	\end{eqnarray*}
	The last inequality follows from $\lambda > \lambda_0\frac{2+2k-k\eta}{k\eta}$.

\end{proof}

\begin{lemma}\label{small2}
	Under the weak identifiability condition with $k=\frac{2}{2s+Rs(s+1)}$, $$ P(\hat S_{\lambda} \subset S) \geq 1-2e^{-t^2},$$
	for $\lambda > \lambda_0(\frac{2+2k-k\eta}{k\eta}+\kappa).$
\end{lemma}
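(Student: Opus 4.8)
The plan is to prove the no-false-positive guarantee $\hat S_\lambda\subset S$ by turning the $l_1$-control of Lemma \ref{weak} into a statement about the gaps between the sorted lasso estimates. First I would invoke Lemma \ref{weak}: the hypotheses of the present lemma place $\lambda$ in exactly its admissible range, so with probability at least $1-2e^{-t^2}$ the lasso solution obeys $\|\hat\bbeta_{S^C}\|_1\le k\|\hat\bbeta_S\|_1$. Writing $N=\|\hat\bbeta_S\|_1$ and $a=\max_{j\in S^c}|\hat\beta_j|$, this at once gives $a\le\|\hat\bbeta_{S^C}\|_1\le kN$, so every zero-coefficient estimate lies in $[0,a]$ and every adjacent distance internal to this zero block is at most $a\le kN$.

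The crux is to exhibit a strictly wider gap above the zeros. I would express $N$ through the successive adjacent distances $h_1,\dots,h_{s'}$ ($s'\le s$) of those nonzero-coefficient estimates that exceed $a$: the $l$-th such distance sits below exactly $s'-l+1$ of them, whence $N-s'a=\sum_l(s'-l+1)h_l$ with weights summing to $s'(s'+1)/2\le s(s+1)/2$. The nonzero estimates falling below $a$ carry mass at most $sa\le skN$, so the mass sitting above $a$ is at least $N-sa$, and a pigeonhole step forces $\max_l h_l\ge 2(N-sa)/(s(s+1))$. Substituting the calibrated constant $k=\frac{2}{2s+Rs(s+1)}$ is exactly what makes this lower bound collapse: a direct computation gives $2(N-sa)/(s(s+1))\ge RkN\ge Ra$, so the largest adjacent distance above the zero block is at least $R$ times every adjacent distance internal to it. This is the quantitative input behind the separation rule \eqref{rule2}: no spacing among the zero estimates can masquerade as the boundary gap.

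Finally I would feed this into the SPSP recursion to conclude that no index of $S^c$ ever enters $\hat S_\lambda$. Step 2.1 initializes the cutoff at $\max_{j\in\hat S_{k-1}^c}|\hat\beta_j^{(k)}|\ge a$, so after initialization the whole zero block lies in the irrelevant region; and whenever the current irrelevant region still contains the dominant gap, that gap---being larger than every internal zero distance by at least the factor $R$ just established---is the one the downward update of Step 2.3 selects, leaving the cutoff above $a$. I expect the main obstacle to be precisely this last step: the update of Step 2.3 relocates the cutoff to the largest gap inside the current irrelevant region and triggers only through the ratio of that gap to the second-largest one, so to forbid a descent that lands interior to the zero block one must control the internal spacing of the zero estimates themselves. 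The $l_1$ bound pins down the total spread of the zero block and its domination by the non-zero gap, but it does not by itself rule out the degenerate configuration in which the irrelevant region has collapsed to the zero block and a single zero-to-zero spacing dominates the rest; handling this case is where the interplay of rule \eqref{rule1}, the recursive initialization of Step 2.1, and the finer ordering information underlying Theorem \ref{compact} must be combined.
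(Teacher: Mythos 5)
Your proposal follows essentially the same route as the paper's own proof: invoke Lemma \ref{weak} to obtain $\|\hat{\bbeta}_{S^C}\|_1 \le k\|\hat{\bbeta}_S\|_1$, sort the estimates over $S$ lying above $\hat\beta^{\max}_{S^C}=\max_{j\in S^c}|\hat\beta_j|$, bound the weighted sum of their adjacent gaps by $\tfrac{s(s+1)}{2}\Delta_{\max}$ (your pigeonhole step), and use the calibrated constant $k=\tfrac{2}{2s+Rs(s+1)}$ to conclude that the dominant gap above the zero block exceeds $R$ times every spacing internal to it --- your inequality $\max_l h_l \ge RkN \ge Ra$ is algebraically identical to the paper's $\hat\beta^{\max}_{S^C} \le \tfrac{1}{R}\Delta_{\max}$. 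The ``main obstacle'' you flag at the end (converting this gap dominance into the algorithmic conclusion $\hat S_\lambda \subset S$ through Steps 2.1 and 2.3) is not resolved in the paper either: its proof stops at that displayed inequality and leaves the connection to the SPSP partitioning rules implicit, so your derivation is as complete as the published one.
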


\begin{proof}
	Denote $\hat\beta^{\max}_{S^C}=\max\{|\hat{\beta}_j|: j \in S^C\}$, and sort the absolute values in  $\{|\hat{\beta}_j|: j \in S,|\hat{\beta}_j| \geq \hat\beta^{\max}_{S^C}\}$ in ascending order to get $ \hat{\beta}^u_{(1)}\leq  \hat{\beta}^u_{(2)}\dots \leq \hat{\beta}^u_{(d)},$ where $d \leq s$ is the cardinality of the set  $\{|\hat{\beta}_j|: j \in S,|\hat{\beta}_j| \geq \hat\beta^{\max}_{S^C}\}.$  Let $\Delta_1=\hat{\beta}^u_{(1)}-\hat\beta^{\max}_{S^C}$ and $\Delta_i=\hat{\beta}^u_{(i)}- \hat{\beta}^u_{(i-1)},$ $i=2, \dots,d,$  then $\|\hat{\beta}_S\|_1 \leq  s\hat\beta^{\max}_{S^C} + \sum_{i=1}^d i\Delta_i. $ Therefore by Theorem \ref{weak} $$\hat\beta^{\max}_{S^C} \leq \|\hat{\beta}_{S^C}\|_1 \leq ks\hat\beta^{\max}_{S^C}+k \sum_{i=1}^d i\Delta_i \leq ks\hat\beta^{\max}_{S^C}+k\frac{s(s+1)}{2}\Delta_{\max},$$
	where $\Delta_{\max}$ is the maximum value of $\Delta_1, \dots, \Delta_d.$ It follows when $k=\frac{2}{2s+Rs(s+1)}$ that
	$$ \hat\beta^{\max}_{S^C} \leq \frac{ks(s+1)}{2(1-ks)}\Delta_{\max}=\frac{1}{R} \Delta_{\max}.$$

\end{proof}

\begin{proof}[Proof of Theorem 3]
	The theorem follows directly from Theorem \ref{small} and Lemma \ref{small2}.
\end{proof}

\subsection{Simulation studies on sensitivity SPSP to the value of $R$}\label{A2}
The constant $R$ in the proposed SPSP algorithm controls the order of the magnitude in the partitioning rule. Here we examine the sensitivity of the SPSP procedure with respect to the choice of the constant $R$. Note that in the numerical studies we conducted, these constants are data adaptive rather than given arbitrarily. 


In particular, we select a sequence of the numbers from $1$ to $10$ with the increment $0.5$, as the candidates of the constant $R$. We use each number as the constant $R$ in the proposed SPSP algorithm on lasso for all the models, and record the means of the false positive rates ($FPR = FP/$\# of true zero features) and the false negative rates ($FNR = FN/$\# of true nonzero features) for each value of $R$.

The results are shown in Figure \ref{SC}. We observe that the means of the FPR and FNR are relatively stable across different choices of the constant $R$. Note that the vertical line in each graph represents the mean of the values of $R$ estimated from the data set. It illustrates that the selection results of the SPSP algorithm are not so sensitive to the choice of the constant $R$ as long as $R$ stays within a reasonable range.

\begin{figure}
	\vspace{6pc}
	\begin{center}
		\caption{The mean of the FPR and FNR over $500$ replicates over different choices of the constant $R$ in SPSP on the Lasso. The vertical lines in the graphs are the average values of the $R$ estimated from the data set.}
		\centerline{\includegraphics[width=\columnwidth]{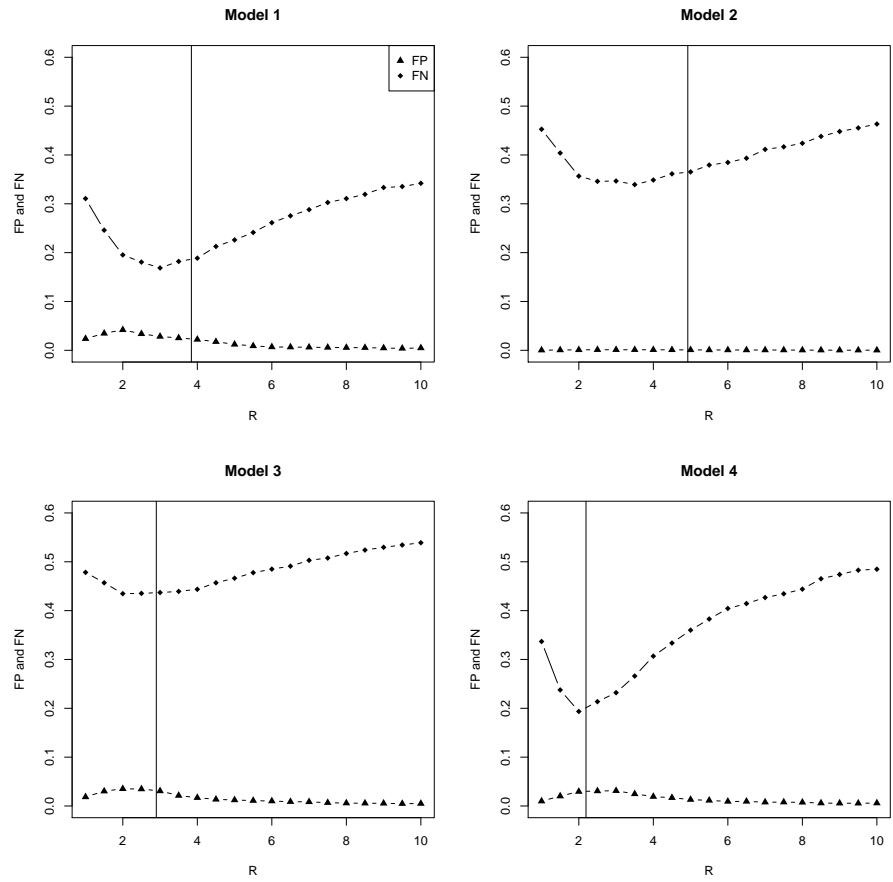}}
		\label{SC}
	\end{center}
\end{figure}

\subsection{Simulation study on the Gaussian graphical model}\label{A3}

Besides feature selection for linear models, the SPSP procedure introduced in the paper can be widely applied for other selection problems under the framework of penalized likelihood estimation. Here we simply present a simulation study to illustrate the performance of the SPSP algorithm for the Gaussian graphical model.

The data is simulated from
$\mathbf{X}\sim N_p(\mathbf{0},\mathbf{\Sigma})$, where the inverse of the covariance matrix is set as
$(\mathbf{\Sigma}^{-11})_{j,j}=1,(\mathbf{\Sigma}^{-11})_{j,j+1}=0.5,(\mathbf{\Sigma}^{-11})_{j+1,j}=0.5,j=1,\cdots,p/4,$ and zero otherwise. We set $p=100$ and $n=50$ in the simulation. This example is a AR$(1)$ model, which has been used by \cite{friedman2008sparse} and \cite{yuan2007model} for the numerical study of the graphical lasso.

We  compare the performance of the proposed SPSP algorithm with the graphical models selected by BIC and the EBIC. Here the details about using BIC and the EBIC to choose the tuning parameter in the graphical models are described in \cite{foygel2010extended}. We apply the \textbf{R} package \textit{glasso} to solve the graphical lasso estimators and apply the package \textit{qgraph} to select the graphical lasso models by BIC and the EBIC. Note that the grid of the tuning parameters in the simulation is generated automatically by the function \textit{glassopath}.

\begin{table}[h] 
	\caption[The mean of FP, FN values of the SPSP algorithm, BIC, and the EBIC over $100$ replicates. ]{The mean of FP, FN values of the SPSP algorithm, BIC, and the EBIC over $100$ replicates (Standard Error in the parentheses). The true model has $25$ nonzero dependencies and $4925$ zero dependencies.}
	\label{glasso}
	\centerline{\tabcolsep=3truept
		\begin{tabular}{rccc}
			\hline
			& SPSP & BIC  & EBIC \\
			\hline
			FP                       & $19.31$ & $116.56$ & $0$ \\
			& $(2.48)$ & $(3.2)$  & $(0)$ \\
			\hline
			FN                      &  $2.50$   & $0$   & $25$  \\
			& $(0.80)$ & $(0.0)$ & $(0)$ \\
			\hline
	\end{tabular}}
\end{table}

We report the mean and the standard error of the number of the false positives (FP), the number of the false negatives (FN) of the SPSP algorithm, BIC and the EBIC over $100$ replicates in Table~\ref{glasso}. We observe that the BIC tends to include too many zero dependencies (high FP value) while the EBIC missed all the nonzero dependencies (high FN value) in the model. Compared with the results of these two criteria, the SPSP algorithm has a much better performance in terms of selection accuracy, which selects most of the nonzero dependencies without adding many zero dependencies in the model.

\bibliographystyle{imsart-nameyear}
\bibliography{spsp_ejs_ref}

\end{document}